\documentclass{IEEEtran}

\usepackage{subcaption}
\usepackage{graphicx}
\usepackage[nolist]{acronym}
\usepackage{xspace}
\usepackage{url}

\newtheorem{assumption}{Assumption}
\newtheorem{theorem}{Theorem}

\newtheorem{definition}{Definition}
\newtheorem{remark}{Remark}
\newtheorem{correctness-argument}{Argument}
\newtheorem{lemma}{Lemma}

\newcommand{\fakedescription}[1]{\medskip\noindent\textit{\textbf{#1}}}

\newcommand{\mysystem}{{\sc Lara}\xspace}

\begin{document}

\begin{acronym}
  \acro{GS}{Group Signatures}
  \acro{RL}{Revocation List}
  \acro{VLR}{verifier local revocation.}
  \acro{CRL}{certification revocation list}
  \acro{VANETs}{vehicular networks}
\end{acronym}

\title{\mysystem: Lightweight Anonymous Authentication with Asynchronous Revocation Auditability}

\author{Claudio Correia, Guilherme Santos, Lu\'{\i}s Rodrigues \\ INESC-ID \\Instituto Superior Técnico, \\Universidade de Lisboa}

\date{\today}

\maketitle

\begin{abstract}
Anonymous authentication is a technique that allows to combine access control with privacy preservation. Typically, clients use different pseudonyms for each access, hindering providers from correlating their activities. To perform the revocation of pseudonyms in  a privacy preserving manner is notoriously challenging. When multiple pseudonyms are revoked together, an adversary may infer that these pseudonyms belong to the same client and perform privacy breaking correlations, in particular if these pseudonyms have already been used. Backward unlinkability and revocation auditability are two properties that address this problem. Most systems that offer these properties rely on some sort of time slots, which assume a common reference of time that must be shared among clients and providers; for instance, the client must be aware that it should not use a pseudonym after a certain time or should be able to assess the freshness of a revocation list prior to perform authentication. In this paper we propose  \mysystem, a Lightweight Anonymous Authentication with Asynchronous Revocation Auditability that does not require parties to agree on the current time slot and it is not affected by the clock skew. Prior to disclosing a pseudonym, clients are provided with a revocation list (RL) and can check that the pseudonym has not been revoked. Then, they provide a proof on non-revocation that cannot be used against any other (past or future) RL, avoiding any dependency of timing assumptions.  \mysystem can be implemented using efficient public-key primitives and space-efficient data structures. We have implemented a prototype of \mysystem and have assessed experimentally its efficiency.
\end{abstract}

\section{Motivation and Goals}
\label{sec:intro}

Authentication is a fundamental requirement in many applications, ensuring that only authorized users can access protected resources. However, authentication processes often pose privacy risks, as they may involve sensitive information, such as habits or locations~\cite{pan2013crowd}. These concerns are particularly relevant in edge applications, including crowdsensing~\cite{ganti2011mobile,pan2013crowd,ni2017security,bastos2018signature} and vehicular networks (VANETs)\cite{remeli2019automatic,ganan2015epa,mixzone}, where mobile clients must authenticate frequently—such as when transitioning between base stations or cells—leading to the continuous generation of sensitive data. 
Information from multiple authentications may be linked to extract additional information such as daily routines~\cite{ganti2011mobile} or health conditions~\cite{lin2012bewell} for financial gain~\cite{GDPR,christin2016privacy,lauinger2012privacy,at_t,gao_17_656,monetising_car_data}.

Anonymous authentication offers both accountability and privacy, protecting clients from curious application providers while ensuring that only authorized participants are able to use the application~\cite{rahaman2017provably,ishida2018fully,khodaei2018efficient,sun2010efficient}. Anonymous authentication can be achieved using \ac{GS} schemes~\cite{chaum1991group,boneh2004group,bastos2018signature} or pseudonym certificates~\cite{lysyanskaya1999pseudonym,chaum1985security}. A desirable feature of anonymous authentication system is the ability to support \emph{revocation} without violating privacy. Revocation aims to prevent some clients from further authenticating in the system. Client revocation may be required in the event of credential misuse, a change in client privileges, stolen secret keys, or simply when a client leaves voluntarily. 

Client revocation can be implemented in different ways. We distinguish two main classes of revocation strategies, namely, \emph{global client revocation} and \emph{verifier local revocation}. Strategies based on global client revocation require all clients to obtain new credentials (or update their credentials) every time a single client is revoked. Examples of this strategy include Ateniese \textit{et al.}~\cite{ateniese2002quasi} (where the group public key is renewed at each revocation) and Ohara \textit{et al.}~\cite{ohara2019shortening} (where a small public membership message is broadcast at each revocation). These approaches make revocation very onerous in scenarios with many clients (e.g., consider vehicle numbers in VANETs) and impractical in mobile settings, where clients may become temporarily disconnected from the network.  Strategies based on \ac{VLR}~\cite{boneh2004group,bringer2011backward} do not require that all clients are contacted when a given client is revoked. Instead, only the nodes that perform authentication (often called the signature \emph{verifiers}) have to be informed about the revoked clients~\cite{Vtoken,vespa,khodaei2018secmace,sun2010efficient,ishida2018fully,bastos2018signature}. In systems that use pseudonyms, this involves sending to the verifiers a \ac{RL} with the pseudonyms of the revoked client. In systems based on group signatures, this involves sending a cryptographic token that can be used to trace the digital signatures of the revoked client. In this paper, we are interested in systems that support verifier local revocation.

A challenge in revocation strategies is that, if one or more credentials have been used before revocation, an attacker can cross-check the information used for revocation with the information collected when those pseudonyms were used to break the privacy of the client. In order to respect \emph{backward unlinkability}~\cite{haas2011efficient,khodaei2018efficient,ishida2018fully,nakanishi2005verifier}, client revocation should not allow linking credentials that have been used prior to the revocation. Previous strategies to provide backward unlinkability assign credentials that are valid only during a given time slot of a certain duration~\cite{haas2011efficient,khodaei2018efficient,rahaman2017provably,bastos2018signature}. Recent work has shown that is possible to make time slots very small~\cite{paper_CCS}, but do not eliminate completely the tension between immediate revocation and backward unlikability. Most critically, these approaches assume a common time reference among all participants: pseudonyms are bound and revoked based on time slots and credentials are tied to specific time slots. If clocks are not synchronized, authentication may fail because the client and the verifier may be operating in distinct time slots.

Given that many revocation schemes cannot ensure privacy for clients that have been revoked, another desirable property of an anonymous authentication  system is \textit{revocation auditability}~\cite{henry2011formalizing}, that states that a client should be able to see his revocation status, before each authentication, avoiding the situation where a client tries to authenticate with a pseudonym that has been revoked without being aware of that fact. An example of a system that offers revocation auditability is Nymble~\cite{tsang2009nymble}.  In Nymble, a client is provided with a revocation list (RL) prior performing authentication and can check is revocation status before proceeding with the operation; if it finds that the pseudonym it was planning to use has been revoked, it will not disclose it to the verifier. Similarly to the systems mentioned in the previous paragraph, Nymble also relies on the notion of timeslots.

In this paper, we propose \mysystem, a \underline{L}ightweight Anonymous Authentication with \underline{A}synchronous \underline{R}evocation \underline{A}uditability that does not rely on a shared notion of time among clients and verifiers. Prior to disclosing a pseudonym, clients are provided with a revocation list (RL) and can check that the pseudonym has not been revoked, \textit{in that specific RL} (regardless of its status on other RLs). Then, clients provide a proof on non-revocation that works with the given RL but that cannot be used against any other (past or future) RL, avoiding any dependency of timing assumptions.  Besides offering revocation auditability, \mysystem offers strong backward (and forward) unlinkability properties that are not tied to time intervals  (unlike many previous solutions~\cite{haas2011efficient,khodaei2018efficient,rahaman2017provably,bastos2018signature,paper_CCS}). This feature allows \mysystem to achieve immediate revocation for clients by publishing RLs without any time interval delay, while also requiring smaller RLs. \mysystem can be implemented using efficient public-key primitives and space-efficient data structures. We propose and compare 3 alternative implementations of \mysystem that use Bloom filters (in different ways) to construct the revocation list. These implementations allow clients to perform the audit efficiently.
We have implemented a prototype of \mysystem and have experimentally evaluated its efficiency.

\section{Revocation Auditability}
\label{sec:relatedwork}

Despite the existence of various schemes for anonymous authentication~\cite{paper_CCS,henry2011formalizing}, they face a significant challenge in maintaining client anonymity after their revocation~\cite{henry2011formalizing}. This issue arises in authentication systems that respect verifier local revocation (VLR)~\cite{boneh2004group,bringer2011backward}, as it requires revocation lists to be published. The existence of these lists allows an attacker to flag/identify the authentication instances performed by a revoked client, which can be exploited to link multiple authentications and, consequently, compromise the client's anonymity. An adversary can exploit these lists to link past and future authentications. Backward unlinkability is the property that safeguards past authentication anonymity. Revocation auditability, a key goal of our system, aims to provide unlinkability in future authentications performed by a revoked user. 

Informally, revocation auditability means that a user has the ability to verify his revocation status at a service provider before attempting to authenticate~\cite{henry2011formalizing,tsang2009nymble}. If the user is indeed revoked, he can then safely disconnect from the service without disclosing any potentially sensitive information. Otherwise, clients could be revoked without their knowledge, and a malicious service provider might still accept authentication requests from these revoked users, thereby compromising their privacy. Such an attack can lead to severe privacy breaches, enabling the service provider to link all of the user's actions, which is especially concerning when schemes rely on uncircumventable forward linkability for achieving revocability. In the literature on anonymous authentication we have identified four main techniques to achieve revocation auditablity:

\fakedescription{Central authority:} The simplest solution is to have clients access a central and remote node that issues revocation lists. Before each authentication, clients pull the most recent list and verify their status in the system (i.e., whether they have been revoked). However, this solution is not desirable for large-scale and dynamic systems like distributed edge storage systems or VANETs, where clients have intermittent connections, and reliance on a central node can lead to multiple availability failures. Additionally, it does not respect VLR.

\fakedescription{Contract-based revocation:} Another approach is to use contract-based revocation~\cite{henry2011formalizing,schwartz2009contractual,loesing2009measuring}, where the contract semantics are agreed upon by both the user and the provider. This enables the user to determine whether a certain action will constitute misbehavior before deciding whether to engage in it. Thus, the client is aware that it may be revoked after such an action. Unfortunately, due to the large variety of applications nowadays, it is very difficult to define all the possible behaviors a client may exhibit, making these approaches inflexible and impractical.

\fakedescription{Revocation list freshness:} A more desirable approach is to ensure that fresh revocation information reaches the client. This is achieved by having RLs published at regular $\Delta$\textit{t} time intervals, containing a signature with the corresponding timestamp~\cite{tsang2009nymble}. When a client performs authentication, it can first request the local revocation list, which must have a fresh signature for the current $\Delta$\textit{t}, and then check if it has not been revoked; otherwise, it should halt the authentication process. This is a practical and easily deployable solution, but the downside is that $\Delta$\textit{t} imposes a tradeoff between system availability (clients do not authenticate if revocation information is not fresh) and effective revocation (the larger the $\Delta$\textit{t}, the longer it takes for a revocation to take effect). Maybe more concerning, the availability of some of these systems relies on the assumption that participants have their clocks synchronized (given that authentication may fail if the client and the verifier do not agree on the current time slot), which can be a vulnerability.

\fakedescription{Non-Revocation proof based on the RL:} The more secure approach is to have clients locally generate a non-revocation proof unique to the presented RL (before authentication, the client downloads the list from the local provider). This guarantees that the generated proof cannot be tested against another RL (that may contain the client), and it is only valid for the locally presented RL. Previous approaches to offer revocation audidatability based on non-revocation proofs have been designed for systems that do not rely on a Trusted Third Party (TTP) to implement revocation. Although powerful, these solutions rely on the use of NZNPs~\cite{au2008perea,BLAC_revoking} that require clients to construct complex proofs, in the critical path of the authentication procedure, imposing a high latency and computation cost. Therefore, these solutions are unnecessarily expensive when a TTP is available.

\fakedescription{Our goals:} Our goal is to derive an anonymous authentication for existing edge applications, such as vehicular networks, where a TTP is responsible for issuing and revoking pseudonyms. We aim at offering the following combination of features that are desirable in this setting:  immediate revocation (there is no need to wait for the end of some pre-defined time-slot to revoke a client), verifier local revocation (edge resources can perform authentication without being required to contact the TTP in the critical path),  backwards unlinkability, revocation auditability, robustness in face of the clock-skew (the safety and/or liveness of the algorithm does not depend on the clock synchronization of all agents), and, last but not the least, can be implemented efficiently.

\section{System Model} 
\label{sec:systemModel}

\subsection{Entities}
\label{sec:entities}

The system is composed of the following entities: clients, verifiers, a (logically) centralized pseudonym manager service, and a trusted administrator. We follow a nomenclature similar to that in previous work~\cite{rahaman2017provably,haas2011efficient}.

\fakedescription{Clients:} the application client that generates signatures to perform authentication against any verifier. Clients are the holders of pseudonyms that they use to generate capabilities to ensure anonymity. Clients are responsible for renewing their pseudonyms when needed. 

\fakedescription{Verifiers:} the component that performs client authentication before granting access to a resource, such as edge storage. Verifiers are responsible for checking the validity of the pseudonyms provided by clients before granting access. They are also responsible for updating their state by fetching the list of revoked capabilities from the pseudonym manager.

\fakedescription{Certification Authority (CA):} this component is responsible for providing new pseudonyms to clients and, when necessary, revoking capabilities generated from these pseudonyms. CA servers are the only entity capable of accessing the true identity of a client.

\fakedescription{Administrator:} a trusted entity responsible for adding clients to the system and instructing the CA to revoke clients. 

\subsection{Timing Assumptions}
\label{sec:timing}

We assume that participants have access to loosely synchronized clocks and can assess whether certificates have expired. As will be explained in Section~\ref{sec-pk:implementation}, we assume that pseudonyms are valid for extended time periods, denoted \textit{ epochs} (for instance, valid for periods of 1 year). 

However, unlike previous authentication schemes that are based on time slots, our authentication scheme does not require the client and the verifier to agree on the current time slot. For example, in~\cite{paper_CCS}, authentication may fail if the client and verifier clocks are not synchronized: The client will provide credentials that are valid for a time window that differs from the verifier time window. This problem is exacerbated when time slots are small, and, therefore, in those solutions, the tolerance to clock skew conflicts with the goal of reducing the linkability window. Our solution does not suffer from this limitation, since the unlikability properties are not tied to the notion of time.

\subsection{Threat Model}
\label{sec:ThreatModel}

We trust the administrator and the CA.  Clients and verifiers are considered untrusted and susceptible to the control of attackers, potentially engaging in malicious activities. 

\fakedescription{Malicious Client:} may attempt to generate pseudonyms or capabilities to impersonate a valid client and access resources to which it is not authorized. It can also try to use old or fake pseudonyms after being revoked to authenticate towards verifiers. 

\fakedescription{Malicious Verifier:} The problem we consider is that a malicious verifier may try to perform link attacks ~\cite{Vtoken,haas2011efficient}. Such attacks involve associating (linking) various pseudonyms with a single client, thereby compromising user anonymity. This type of attack becomes trivial when revocation lists are disclosed that enumerate all of a client's pseudonyms. A malicious verifier could potentially compile all the observed data with the aim of deducing user identities. 

\fakedescription{Trust Assumptions:} Entities use asymmetric key pairs to establish secure channels. Clients employ pseudonyms for authentication, integrity, and non-repudiation.  The CA will only revoke users if instructed by the trusted and authenticated administrator, and will generate fresh pseudonyms for non-revoked and authenticated clients. We assume that there is no collusion between the trusted CA and the verifiers. In scenarios were potential collusion between the trusted CA and the verifiers is a threat, the architecture can be augmented with additional components, as described in the literature~\cite{tsang2009nymble, Vtoken}. For instance, in Nymble, there is a Certification Authority that knows the real identity of the user which independent of the server that provides the pseudonyms used in the authentication.

\section{\mysystem}
\label{secshort:system}

We now introduce our scheme for anonymous authentication. To the best of our knowledge, \mysystem is the first pseudonym-based system to offer both immediate revocation, auditability and pure backward unlinkability. We use the term ``pure'' because, unlike~\cite{haas2011efficient,paper_CCS}, in \mysystem unlikability is not tied to time slots of discrete granularity.  Our solution enables immediate revocation since a new \ac{RL} can be published without delay, while systems based on time slots must wait for the current slot to end.

\begin{figure}[t]
  \centering
  \includegraphics[width=\columnwidth]{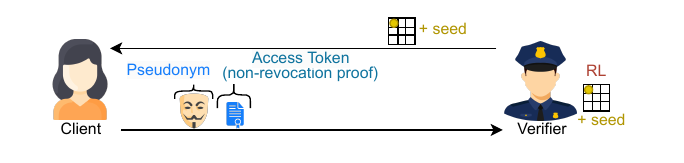} 
  \caption{Authentication in \mysystem} 
  \label{fig:privacyKeeper}
\end{figure}

In \mysystem, the authentication is performed against a Revocation List (RL). A RL includes a unique seed and an encoded set of revoked access tokens. Revoked access tokens are encoded in an RL using a secure one-way function: it is possible to check if a token is in the set but impossible to extract tokens from the set. The authentication
requires the client to transmit a pseudonym and an access token that is not included in the set of revoked access tokens. This token proves that the given pseudonym was not revoked; this is conceptually similar to the non-revocation proofs used in NZNPs~\cite{au2008perea,BLAC_revoking}, as illustrated in Fig.~\ref{fig:privacyKeeper}. By using access tokens, our scheme enables client revocation without disclosing pseudonyms.  To achieve auditability, we require tokens to be uniquely linked to a given RL, similarly to the non-revocation proofs used in NZNPs. To accomplish this, clients request a RL from the verifier at the time of authentication and generate their token exclusively for that RL. Our scheme achieves this by letting access tokens for a given RL be a function of the unique seed of that RL. Clients then generate the token using this seed, establishing a unique connection between the token and the presented RL, as shown in Fig.~\ref{fig:privacyKeeper}. The verifier is then responsible for cryptographically checking that all the presented information has been correctly constructed, for instance, verifying that the presented token corresponds to the RL seed. 

\subsection{Algorithm}
\label{sec-pk:implementation}

We will now describe in more detail how the tokens used for authentication and revocation are constructed. Table~\ref{tab:notation} provides a summary of the notation used in the description of the algorithm.

\begin{table*}[ht]
    \centering
    \caption{Summary of Notation}
    \label{tab:notation}
    \begin{tabular}{lr}
    \hline\hline
         $\langle K^+_{CA}, K^-_{CA} \rangle$  & asymmetric key pair for the CA\\
         $p_x = \langle K^+_{x} , \{K^+_{x}\}^{K^-_{CA}} \rangle$ & pseudonym $x$ \\
         $P_c$ & set of pseudonyms of a client $c$\\
         $\textit{seed}_{z}$ & unique seed of a revocation list\\
         $\sigma_x^{RL_z} = \{ \textit{digest}(seed_{z}) \}^{K^-_{p_{x}}}$ & access token for $p_x$ in revocation list $\textit{RL}_z$\\
         $\textit{digest}(\sigma_x^{RL_z})$ & encoded access token\\
         $\textit{sreat}_z$ &  \underline{s}et of \underline{r}evoked \underline{e}ncoded \underline{a}ccess \underline{t}okens in revocation list $\textit{RL}_z$\\
         $RL_z = \langle \{\textit{seed}_{z}, \textit{sreat}_z, \textit{digest}( seed_z \mathbin\Vert \textit{sreat}_z) \}^{K^-_{CA}} \rangle$ & revocation list\\
         \hline\hline
    \end{tabular}
\end{table*}

\fakedescription{Notation:} We assume $\langle K^+, K^- \rangle$ is a pair of public and private asymmetric keys, respectively. A digital signature is defined by $\{\textit{digest}(a)\}^{K^-}$, where the private key $K^-$ is used to sign the digest of the content of $a$.

\fakedescription{Time:} In this scheme, time is divided into epoch periods with a duration of $e$. These epochs are relatively long time intervals, such as one year or one month. Each pseudonym is associated with a specific epoch period. Epochs are use to simplify the garbage collection of old pseudonyms.

\fakedescription{Client Storage:} We assume that a client has performed some initial setup and stores locally multiple pseudonyms valid for the current epoch. Each pseudonym $p$ has a public and private key pair, $\langle K^+_{p}, K^-_{p} \rangle$, and the pseudonym is defined by its public key and a digital signature from a CA,  $ p = \langle K^+_{p} , \{K^+_{p}\}^{K^-_{CA}} \rangle$. The client holds a number $I$ of valid pseudonyms for the current epoch. 

\fakedescription{Access Token:} An access token $\sigma_x^{RL_z}$ in \mysystem is always cryptographically bounded to a pseudonym $p_x= \langle K^+_{p_x}, \{K^+_{p_x}\}^{K^-_{CA}} \rangle, x\in [1, I]$ and a unique $seed_{z}$. The token is used to proof that some selected pseudonym, $p_x$, has not been revoked in the given revocation list $RL_z$. The next paragraph explains how $RL_z$ is uniquely bounded to $seed_{z}$. The construction of our token is quite straightforward. It involves generating a digital signature using the private key associated with $p_x$ over the digest of the provided $seed_{z}$. Therefore, $\sigma_x^{RL_z}$ is defined as $\{ \textit{digest}(seed_{z}) \}^{K^-_{p_{x}}}$. For authentication, the client forwards both the selected pseudonym and the corresponding token, denoted as $\langle p_x , \sigma_x^{RL_z} \rangle$, as depicted in Fig.~\ref{fig:privacyKeeper}.

\fakedescription{Revocation Lists:} In \mysystem, the RLs include an unique seed and an encoded set of revoked access tokens. Each RL is constructed by a trusted entity, the CA. This CA has the capability to generate all the access tokens that can be tied to any given RL. The generation of RLs is the only aspect of our scheme that has some computational impact because each RL is dependent on a distinct seed. When the CA needs to revoke a client at a specific time $t$, it first generates a random value, which becomes the $seed_{z}$ associated with a new revocation list, $RL_z$. It then retrieves each pseudonym of the recently revoked client, along with all other pseudonyms that were revoked in the current epoch, from other previously revoked clients. Subsequently, it calculates the digest of the seed and generates each token using all the revoked pseudonyms, resulting in $\langle 
 \{\textit{digest}(seed_z)\}^{K^-_{p_i}} \rangle, \forall i \in R_{set}$, where $R_{set}$ is the set of all revoked pseudonyms in the current epoch. Each of these tokens is encoded in the RL's set of revoked access tokens using a secure one-way function. We denote the resulting \underline{s}et of \underline{r}evoked \underline{e}ncoded \underline{a}ccess \underline{t}okens \textit{sreat}$_z$. Finally, the CA creates a final digital signature covering both the RL and the seed together, denoted as $\{ \textit{digest}( seed_z\mathbin\Vert \textit{sreat}_z ) \}^{K^-_{CA}}$. The CA can then \emph{immediately publish} these three elements: the seed, the set of revoked encoded access tokens, and the signature, all of which define the new RL.

In our scheme we ensure the verifier cannot provide a fake $RL_{z}$ by leveraging digital signatures for authenticity. However, similar to related work, a malicious verifier might provide an old RL, misleading a revoked client into a successful authentication attempt. \mysystem does not suffer from this vulnerability and preserves revocation auditability. Our scheme uniquely binds the authentication information with the provided RL (through the seed), rendering it useless to test across different or newer RL versions, regardless of the user's revocation status. Next, we explain how our unique token can still provide proof of non-revocation.

\fakedescription{Non-Revocation \textit{proof}:} For authentication, the client must provide a pseudonym and a non-revocation proof that this pseudonym is valid. The proof is provided by presenting a valid access token that is not included in the RL’s encoded set of revoked access tokens. This provided token is associated with the given pseudonym \emph{and} to the unique seed of the RL. In detail, the process to demonstrate that the given pseudonym $p_x$ is not revoked is as follows.

In the first step, the client must obtain a revocation list $RL_{z}$ and its associated $seed_{z}$ from the verifier. Subsequently, the client validates the signature, generated by the CA, on the revocation list. This validation confirms the integrity and authenticity of both the revocation list and the seed. After this verification, the client generates a token $\sigma_x^{RL_z}$ for a selected pseudonym $p_x$, as described earlier. It is important to note that this access token is only applicable for testing against $RL_{z}$ and it is not valid to be tested against any other RL. Additionally, as previously mentioned, the verifier cannot provide a fake $RL_{z}$.

Consequently, it is the verifier's responsibility to validate $\sigma_x^{RL_z}$ by ensuring that it has been correctly constructed, that it corresponds to the presented pseudonym $p_x$, and that it has not yet been revoked. To achieve this, the verifier first checks if the pseudonym has a valid signature from the CA. Subsequently, it uses the provided pseudonym's public key, $K^+_{p_x}$, to verify the digital signature within $\sigma_x^{RL_z}$. This signature must be correctly constructed using $K^-_{p_x}$ and must correspond to the correct $seed_{z}$ from $RL_{z}$. If this is confirmed, it indicates that the proof has been correctly constructed and corresponds to the presented pseudonym $p_x$.
The next step for the verifier is to ascertain whether the access token $\sigma_x^{RL_z}$ can be found in the encoded set of revoked access tokens of  $RL_{z}$. If this access token belongs to the encoded set of revoked access tokens, the pseudonym has been revoked; otherwise, the access token provides proof that this pseudonym has not been revoked and is valid.

\fakedescription{Authentication:} At a high level, authentication begins with the client downloading the RL from the local verifier. Then, the client selects a pseudonym and generates an access token for the seed associated with RL, using the private key of this pseudonym. Next, the client checks whether it has been revoked by testing if the token is found in the RL's encoded set of revoked access tokens. If not, both the access token and pseudonym are sent to the verifier, which authenticates the token and tests it against the same RL. If the token is found in the RL's encoded set of revoked access tokens, the client is considered revoked; otherwise, it is deemed valid, and the authentication is accepted. The step of downloading the list is necessary to ensure auditability, enabling the client to verify his status. It is important to note that even if the verifier presents an older RL, the proof generated by the client is only valid for that specific RL, thus guaranteeing revocation auditability.

\section{Security Proof}

In this section, we first present a proof that \mysystem preserves unlinkability, and afterward we prove that \mysystem is also capable of offering auditability.

\subsection{Unlikability}

We now provide a proof that \mysystem offers full unlinkability, meaning that revocation information cannot be linked to the information used by clients when authenticating before and after the revocation. Our scheme provides unlinkability because access tokens are only valid for a specific RL and, therefore, the revoked access tokens that are encoded in different RLs are necessarily distinct.

\begin{assumption}
\label{PK-assumption:function}
There is a secure one-way function.
\end{assumption}

We assume the availability of a secure collision-resistant hash function \textsc{h()}, such as SHA256, that is easy to compute on every input, but not possible to invert given the output.

\begin{assumption}
\label{PK-assumption:pseu}
Only the client and the CA can access pseudonyms.
\end{assumption}

We assume that both the client and the CA will securely store the secret private key of each pseudonym and never disclose it. The CA is responsible for generating different pseudonyms for a client. The cryptographic keys $\langle K^- , K^+ \rangle$ of a pseudonym are randomly generated by the CA using a secure cryptographic scheme.

\begin{assumption}
\label{PK-assumption:asymmetric}
There is a secure deterministic digital signature scheme.
\end{assumption}

We assume the availability of a secure deterministic digital signature scheme, such as Ed25519~\cite{bernstein2012high}, which ensures the usual authentication, integrity, and non-repudiation properties. A deterministic signature is generated by $\{\textsc{h}(a)\}^{K^-}$ using $K^-$ to sign the digest of the content of $a$, and can be verified with $K^+$.

\begin{assumption}
\label{PK-assumption:unused}
The client picks an unused pseudonym to authenticate.
\end{assumption}

For each authentication, the client chooses a pseudonym that has never been used. If it runs out of pseudonyms to use, it requests more from the CA.

\begin{assumption}
\label{PK-assumption:revoked}
Given a revocation list $RL_z$, the client only proceeds with authentication with a pseudonym $p_x$ if $p_x$ has not been added to the $RL_z$'s encoded set of revoked access tokens.
\end{assumption}

When performing authentication, the client requests the current $RL_z$ from the verifier, and checks if pseudonym $p_x$ has been revoked. We recall that $p_x$ is marked as revoked in $RL_z$ if $\sigma_x^{RL_z}$ belongs to the RL's encoded set of revoked access tokens. If the client finds that $p$ has been revoked in RL, it stops from using the service and does not provide any more information; otherwise, it continues with the authentication.

\begin{definition}
\label{PK-definition:pseudonym}
A valid pseudonym $p_x$ is defined as:
\[ p_x = \langle K_{x}^-, K_{x}^+, \{\textsc{h}(K_{x}^+)\}^{K_{CA}^-}  \rangle \]
\end{definition}

Where $(K_x^-, K_x^+)$ represent the private and public key of the pseudonyms respectively, and $\{\textsc{h}(K_{x}^+)\}^{K_{CA}^-}$ is a digital signature of the CA over the public key to provide authenticity for the pseudonym. From Assumption~\ref{PK-assumption:pseu}, the CA provides the pseudonyms to clients with the respective signature.

\begin{remark}
\label{PK-remark:proof}
A valid token $\sigma$ for revocation list RL is a unique digital signature over the RL's seed.
\end{remark}

A token $\sigma_x^{RL_z}$ represents the unique bond of a pseudonym $p_x$ to a given revocation list $RL_z$. This bond is implemented by a digital signature performed over the unique $seed_z$ that accompanies the list, where $\sigma_x^{RL_z} = \{\textsc{h}(seed_z)\}^{K_x^-}$. By using the private key $K_x^-$ of the pseudonym (from Definition~\ref{PK-definition:pseudonym}), we enforce authentication, integrity, and non-repudiation of the token (from Assumption~\ref{PK-assumption:asymmetric}).

\begin{remark}
\label{PK-remark:list}
A RL is immutable, unique, and cannot be tempered with.
\end{remark}

All the revocation lists are generated and published by the CA. For a new $RL_z$, the CA generates a unique and random $seed_z$, then computes a revoked access token for each revoked pseudonym as $\sigma_i^{RL_z} = \{\textsc{h}(seed_z)\}^{K_{p_i}^-}$ (from Remark~\ref{PK-remark:proof}) and then encodes each revoked access token using a one-way hash function as $\textsc{h}(\sigma_i^{RL_z})$ (from Assumption~\ref{PK-assumption:function}). The encoded set of revoked access tokens \textit{sreat}$_z$ is then bound to the $seed_z$ through a digital signature that proves its authenticity, resulting in $\{ { \textsc{h}( \textit{sreat}_z \mathbin\Vert seed_z) \}}^{K^-_{CA}}$. Any atempt to tamper the RL will invalidate its signature. The CA will never reuse the same seed, making each encoded set of revoked access tokens unique and immutable.

\begin{remark}
\label{PK-remark:authenticating}
A valid authentication request is defined as:
\[ \langle x, \sigma_x^{RL_z}  \rangle \]

\end{remark}

When authenticating, the client must provide the authentication request where $x$ is defined as the public information of some pseudonym $p_x$, specifically the public key and the CA signature:

$ \langle K_{p_x}^+, \{\textsc{h}(K_{p_x}^+)\}^{K_{CA}^-} \rangle $. 

Before presenting this information towards the verifier, the client requests the current revocation list $RL_z$ and generates $\sigma_x^{RL_z}$ from Remark~\ref{PK-remark:proof}. Following Assumption~\ref{PK-assumption:revoked}, if the client does not find $\sigma_x^{RL_z}$ in the encoded set of revoked access tokens for $RL_z$, it proceeds with the authentication by sending the authentication request $\langle x, \sigma_x^{RL_z} \rangle$ to the verifier.

\begin{theorem}
\label{PK-theorem:unlinkability}
Two different authentications cannot be linked to the same client.
\end{theorem}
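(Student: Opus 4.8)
The plan is to argue that the only information a verifier receives across two authentication sessions is, for each session, a pseudonym public key with its CA signature plus a token $\sigma_x^{RL_z}$, and to show that none of these pieces can serve as a correlating identifier. First I would appeal to Assumption~\ref{PK-assumption:unused}: since each authentication uses a fresh, never-before-seen pseudonym $p_x$, the public keys $K^+_{p_x}$ and the CA signatures $\{\textsc{h}(K^+_{p_x})\}^{K^-_{CA}}$ presented in two different sessions are syntactically distinct and, by Assumption~\ref{PK-assumption:pseu} (keys generated independently and at random by the CA) and Assumption~\ref{PK-assumption:asymmetric}, carry no structure that links them to a common owner --- a verifier cannot distinguish ``two pseudonyms of the same client'' from ``two pseudonyms of different clients.'' So the pseudonym component leaks nothing.

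Next I would handle the token component, which is the substantive part. Consider two authentication requests $\langle x, \sigma_x^{RL_z}\rangle$ and $\langle y, \sigma_y^{RL_w}\rangle$. There are two cases. If the two sessions use different revocation lists ($z \neq w$), then by Remark~\ref{PK-remark:list} the seeds $seed_z$ and $seed_w$ are distinct random values, so $\sigma_x^{RL_z} = \{\textsc{h}(seed_z)\}^{K^-_{p_x}}$ and $\sigma_y^{RL_w} = \{\textsc{h}(seed_w)\}^{K^-_{p_y}}$ are signatures over different messages under different keys; neither the verifier holding $RL_z$ nor anyone holding $RL_w$ can test one token against the other list, and by Remark~\ref{PK-remark:proof} the token reveals nothing about $p_x$ beyond what $K^+_{p_x}$ already reveals. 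If instead the two sessions use the same $RL_z$, the tokens are $\{\textsc{h}(seed_z)\}^{K^-_{p_x}}$ and $\{\textsc{h}(seed_z)\}^{K^-_{p_y}}$: they are over the same message but under independent keys, so again they are uncorrelated. In either case, the crucial point to establish is that a verifier's ability to \emph{recognize} a token --- i.e.\ to match it against entries of $\textit{sreat}_z$ --- does not help, because the non-revocation check only succeeds (by Assumption~\ref{PK-assumption:revoked}) when the token is \emph{absent} from $\textit{sreat}_z$, so a successful authentication yields no membership information that could be cross-referenced with another RL, and the encoding $\textsc{h}(\cdot)$ is one-way by Assumption~\ref{PK-assumption:function}, so the verifier cannot invert entries of any RL to recover a pseudonym and link it forward or backward.

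I would then combine these observations: any link-attack strategy by a malicious verifier (possibly colluding with other verifiers and replaying old RLs) is a function of the transcripts it sees, and every component of every transcript has been shown to be either fresh-and-independent (pseudonym keys and signatures) or a deterministic function of an independent secret key applied to a public seed (the token), with no shared value appearing across two honest sessions except public RL data that is identical for all clients. Hence no efficient adversary can do better than guessing, which is the claim. The main obstacle I anticipate is making the token argument airtight: I must rule out that the signature scheme leaks a relation between $\{\textsc{h}(seed_z)\}^{K^-_{p_x}}$ and $\{\textsc{h}(seed_w)\}^{K^-_{p_x}}$ (two tokens from the \emph{same} client on \emph{different} seeds) --- this is exactly where the determinism and unforgeability/pseudorandomness assumptions on the signature scheme (Assumption~\ref{PK-assumption:asymmetric}) must be invoked carefully, since a naive deterministic signature could in principle expose a common-key fingerprint; I would state precisely which property of Ed25519-style signatures (signatures under distinct keys are computationally unlinkable) is being relied upon, and note that this is the place where the proof is a reduction to that assumption rather than an unconditional argument.
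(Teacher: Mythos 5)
Your proposal is correct and follows essentially the same route as the paper: you rule out linking via the pseudonym component using the freshness and independent random generation of pseudonym keys (the paper's Condition~1), and you rule out linking via revocation lists using the one-way encoding of tokens, the client-side audit that withholds revoked tokens, and the seed-dependence that makes each token a single-RL artifact (the paper's Condition~2). Your closing observation that the deterministic signature scheme must additionally be \emph{key-private} --- so that $\{\textsc{h}(seed_z)\}^{K^-_{p_x}}$ carries no fingerprint of $K^-_{p_x}$ that could be matched across sessions --- is a genuine refinement that the paper leaves implicit under ``secure deterministic digital signature scheme,'' though note that Assumption~\ref{PK-assumption:unused} already guarantees no pseudonym key ever signs two different seeds in the clear, which narrows the property actually needed.
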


\begin{proof}
Consider an authentication from client $c$ that provides an authentication request $\langle x, \sigma_x^{RL_i} \rangle$ for pseudonym $x\in {P}_c$ in face of  some revocation list $RL_i$ (where ${P}_c$ is the set of pseudonyms of client $c$). Consider another authentication from client $c'$ that provides the request $\langle y, \sigma_y^{RL_j} \rangle$ for pseudonym $y\in{P}_{c'}$ in face of some revocation list $RL_j$. For an attacker to successfully link the two authentications, it needs to infer that $c=c'$. There are two ways for an attacker to achieve this goal. 

One is to assert that $x$ and $y$ belong to the same client,  i.e.:

\begin{equation}
\textsc{assert} (\exists_{c,{P}_c},~x \in {P}_c ~\land~ y \in {P}_c)
\label{eq:samePseu} 
\end{equation}

This condition holds true when $x$ and $y$ belong to the same pseudonym $p\in P_c$, and according to Remark~\ref{PK-remark:authenticating}, this would require $x$ and $y$ to share the same public key $K_{p_c}^+$. However, by Assumption~\ref{PK-assumption:unused}, the client never uses the same pseudonym twice. Additionally, according to Assumption~\ref{PK-assumption:pseu}, the CA generates all the public and private keys of each pseudonym using a secure, random, and invertible cryptographic scheme, this will result in unlinkable and random pseudonyms by construction. Therefore, an attacker will always observe $x\neq y$ and never be able to assert Equation~\ref{eq:samePseu} as true.

The other way is to use the information in some other revocation list $RL_k$ to link pseudonyms $x$ and $y$. Let us use \textit{RL.sreat} to denote the encoded set of revoked access tokens of revocation list RL.

\begin{equation}
\textsc {assert} (\exists_{RL_k},~\textsc{h}(\sigma_x^{RL_k}) \in \textit{RL$_k$.sreat}~\land~  \textsc{h}(\sigma_y^{RL_k}) \in \textit{RL$_k$.sreat})
\label{eq:rlk} 
\end{equation}

This condition holds true if the attacker can verify that exist a list $RL_k$ that contains encoded tokens for both pseudonyms $x$ and $y$. Because tokens are encoded by secure one-way functions, the attacker cannot extract $\sigma_x^{RL_k}$ and $\sigma_y^{RL_k}$ from \textit{RL$_k$} or \textit{RL$_k$.sreat} and therefore these must be provided by the client. By construction (Assumption~\ref{PK-assumption:revoked}), the client will not provide $\sigma_x^{RL_k}$  if $\textsc{h}(\sigma_x^{RL_k}) \in \textit{RL}_k\textit{.sreat}$. Thus, the attacker can only have access to $\sigma_x^{RL_k}$ if there is some other $RL_{k'}$ such that $\sigma_x^{RL_k}=\sigma_x^{RL_{k'}}$, however, by Remark~\ref{PK-remark:list}, this is impossible, because tokens are unique, given that they depend cryptographically on different seeds. Therefore, the attacker is not able use some public revocation list to assert Equation~\ref{eq:rlk} as true.

Therefore, we can conclude that the attacker is incapable of leveraging the available information to assert either Condition~\ref{eq:samePseu} or~\ref{eq:rlk} as true. This implies that the attacker cannot infer whether $c = c'$. Consequently, an attacker is unable to link two different authentication attempts to a single client.

\end{proof}

\subsection{Auditability}

We now provide a proof that \mysystem offers full auditability.

\begin{lemma}
\label{PK-lemma:inclusion}
After a pseudonym $p_x$ is revoked, any RL presented to clients must encode $\sigma_x^{RL_z}$ in the RL's encoded set of revoked tokens.
\end{lemma}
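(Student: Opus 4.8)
The plan is to prove Lemma~\ref{PK-lemma:inclusion} essentially by appealing to the honest behavior of the CA (which we trust, per the threat model) together with the definitions already established. The statement says: once $p_x$ has been revoked, every RL that is subsequently constructed and presented to clients must contain $\textsc{h}(\sigma_x^{RL_z})$ in its \textit{sreat}. First I would recall that, by the construction in Remark~\ref{PK-remark:list}, every RL is generated by the CA, and that the CA, when it builds a new $RL_z$ with fresh seed $seed_z$, iterates over $R_{set}$, the set of \emph{all} pseudonyms revoked in the current epoch, and for each such pseudonym $p_i$ computes the token $\sigma_i^{RL_z} = \{\textsc{h}(seed_z)\}^{K_{p_i}^-}$ and inserts $\textsc{h}(\sigma_i^{RL_z})$ into \textit{sreat}$_z$. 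So the core of the argument is just: if $p_x$ was revoked, then $p_x \in R_{set}$, hence $\textsc{h}(\sigma_x^{RL_z}) \in$ \textit{sreat}$_z$ for that $RL_z$.

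The key steps, in order, are: (1) fix an arbitrary RL, call it $RL_z$ with seed $seed_z$, that is presented to a client \emph{after} $p_x$ has been revoked, and observe that, being a valid RL accepted by the client, its CA signature $\{\textsc{h}(seed_z \mathbin\Vert \textit{sreat}_z)\}^{K_{CA}^-}$ verifies, so by unforgeability of the signature scheme (Assumption~\ref{PK-assumption:asymmetric}) it was genuinely produced by the CA; (2) invoke the CA's honest construction procedure, noting that the CA never ``un-revokes'' a pseudonym within an epoch, so the revocation of $p_x$ implies $p_x \in R_{set}$ at the time $RL_z$ was built; (3) conclude that the CA, following its procedure, computed $\sigma_x^{RL_z} = \{\textsc{h}(seed_z)\}^{K_{p_x}^-}$ and placed $\textsc{h}(\sigma_x^{RL_z})$ into \textit{sreat}$_z$. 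Since $\sigma_x^{RL_z}$ is well-defined (it is the deterministic signature of $\textsc{h}(seed_z)$ under $K_{p_x}^-$, by Remark~\ref{PK-remark:proof} and the determinism of the signature scheme), this is exactly the token that the client holding $p_x$ would itself generate when auditing against $RL_z$, which is what makes the inclusion meaningful for auditability.

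I expect the main obstacle to be a subtlety rather than a hard technical step: the lemma implicitly needs the signature scheme to be \emph{deterministic} so that the token $\sigma_x^{RL_z}$ the CA encodes is bit-for-bit identical to the one the client later recomputes; otherwise $\textsc{h}(\sigma_x^{RL_z}^{\text{CA}}) \neq \textsc{h}(\sigma_x^{RL_z}^{\text{client}})$ and the audit would miss the revocation. This is precisely why Assumption~\ref{PK-assumption:asymmetric} insists on a deterministic scheme (Ed25519), and the proof should call this out explicitly. A second point worth stating carefully is the scoping ``in the current epoch'': the lemma should be read as applying to RLs issued in the same epoch in which $p_x$ was revoked (pseudonyms and their revocations are garbage-collected across epoch boundaries), and I would add a sentence to that effect so the claim is not overstated. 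Beyond that, the argument is a direct unrolling of Remark~\ref{PK-remark:list} and the trust assumption on the CA, with no real computation required.
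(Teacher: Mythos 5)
Your proof is sound in its core mechanics but takes a genuinely different route from the paper's, and the difference is worth understanding. The paper's proof of Lemma~\ref{PK-lemma:inclusion} is not a construction argument at all: it reads the ``must'' in the statement normatively, observing that authentication is granted whenever $\sigma_x^{RL_z}$ is absent from the encoded set, and therefore that \emph{for revocation of $p_x$ to succeed} the token has to appear in every RL presented after revocation. It is essentially a one-line necessity claim that defers the question of whether the system actually achieves this to Remark~\ref{PK-remark:list} and the auditability theorem. You instead prove the stronger, factual reading: the CA is trusted, signature unforgeability prevents fake RLs, the CA's procedure iterates over all of $R_{set}$, and determinism of the signature scheme guarantees that the CA-encoded token is bit-for-bit the one the client recomputes. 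Your version is more informative --- in particular, calling out determinism as the reason the audit cannot silently miss a revocation is a point the paper leaves implicit --- and it is closer to what Theorem~\ref{PK-theorem:auditability} actually needs.

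The one place where your stronger reading creates an obligation the paper's weaker reading avoids: ``presented to a client after $p_x$ is revoked'' does not imply ``constructed after $p_x$ is revoked.'' A malicious verifier can present an RL that the CA built \emph{before} the revocation, and that RL genuinely does not encode $\sigma_x$; your step (2), which places $p_x \in R_{set}$ at build time, only covers RLs built post-revocation. The paper acknowledges this attack separately (the old-RL discussion in Section~\ref{sec-pk:implementation}) and its defence is the seed binding, not the lemma. You should either restrict the lemma to RLs generated after revocation, or add a sentence noting that stale RLs are handled by the per-RL binding of tokens rather than by this inclusion property. Your epoch-scoping caveat is correct and matches the paper's intent.
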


\begin{proof}
A client authenticates against a given RL by selecting a pseudonym $p_x$ and presenting $\langle x, \sigma_x^{RL_z} \rangle$. Authentication is granted if $\sigma_x^{RL_z}$ has not been added to the RL's encoded set of revoked tokens. Thus, for revocation of pseudonym $p_x$ to succeed, $\sigma_x^{RL_z}$ must be added to the RL's encoded set of revoked tokens of all RL presented to a client after revocation.
\end{proof}

\begin{lemma}
\label{PK-lemma:exclusion}
A client authenticates using pseudonym $p_x$ only if, when presented with a revocation RL, it cannot find $\sigma_x^{RL_z}$ in the RL's encoded set of revoked tokens.
\end{lemma}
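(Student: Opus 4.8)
The plan is to obtain the statement as an almost immediate consequence of the client's prescribed behavior during authentication, together with Assumption~\ref{PK-assumption:revoked} and the determinism of the underlying primitives. First I would recall, following Remark~\ref{PK-remark:authenticating}, the exact sequence of steps a client performs to authenticate with pseudonym $p_x$ against a revocation list: it obtains $RL_z$ and its accompanying $seed_z$ from the verifier; it validates the CA signature $\{\textsc{h}(\textit{sreat}_z \mathbin\Vert seed_z)\}^{K^-_{CA}}$ and halts if that signature does not verify; it generates the token $\sigma_x^{RL_z} = \{\textsc{h}(seed_z)\}^{K_x^-}$ as in Remark~\ref{PK-remark:proof}; and only then does it test whether $\textsc{h}(\sigma_x^{RL_z}) \in \textit{RL}_z.\textit{sreat}$ before deciding whether to transmit $\langle x, \sigma_x^{RL_z}\rangle$.

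Next I would argue that this local membership test is \emph{faithful}, i.e., that the object the client generates and looks up is exactly the one against which revocation is decided. This step uses Assumption~\ref{PK-assumption:asymmetric}: because the signature scheme is deterministic, the token $\sigma_x^{RL_z}$ computed by the client from $seed_z$ and $K_x^-$ is identical to the token the CA would have inserted into $\textit{sreat}_z$ had $p_x$ been revoked (Remark~\ref{PK-remark:list}), and identical to the token the verifier re-derives; and Assumption~\ref{PK-assumption:function}: the encoding $\textsc{h}(\cdot)$ is a fixed function, so the encoded value the client tests coincides bit-for-bit with the encoded value stored in $\textit{sreat}_z$. Hence ``the client cannot find $\sigma_x^{RL_z}$ in $\textit{RL}_z.\textit{sreat}$'' is the same event as ``$p_x$ is not marked as revoked in $RL_z$'' in the sense of Assumption~\ref{PK-assumption:revoked}.

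Finally I would close by contraposition. By Assumption~\ref{PK-assumption:revoked}, if the client does find $\textsc{h}(\sigma_x^{RL_z})$ in $\textit{RL}_z.\textit{sreat}$, it stops and never emits the request $\langle x, \sigma_x^{RL_z}\rangle$; and by the validation step above, if the RL signature fails to verify it also halts. Therefore, in every execution in which the client actually completes an authentication using $p_x$, the RL it was presented is well-formed and it did not find $\sigma_x^{RL_z}$ in that RL's encoded set of revoked tokens, which is exactly the claim. The only point I would be careful to spell out — and the one I regard as the substantive content rather than bookkeeping — is the faithfulness observation of the second paragraph: determinism of both the signature scheme and the encoding is what makes the client's private check provably equivalent to the verifier's, so that ``proceeded with authentication'' can be soundly equated with ``token absent from the encoded set''; the remainder is a restatement of Assumption~\ref{PK-assumption:revoked} in the notation of Remarks~\ref{PK-remark:proof} and~\ref{PK-remark:list}.
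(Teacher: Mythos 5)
Your proposal is correct and follows essentially the same route as the paper: the lemma is a direct consequence of the client's prescribed behavior in Assumption~\ref{PK-assumption:revoked}, which is all the paper's own (two-sentence) proof invokes. Your additional ``faithfulness'' observation --- that determinism of the signature scheme and the encoding makes the client's local test coincide with the revocation condition --- is a worthwhile elaboration the paper leaves implicit, but it does not change the structure of the argument.
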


\begin{proof}
A client authenticates against a given RL by selecting a pseudonym $p_x$ and presenting $\langle x, \sigma_x^{RL_z} \rangle$. By construction (see Assumption~\ref{PK-assumption:revoked}) a client will only authenticate using $p_x$ if it cannot find $\sigma_x^{RL_z}$ in the RL's encoded set of revoked tokens.
\end{proof}

\begin{theorem}
\label{PK-theorem:auditability}
\mysystem ensures auditability
\end{theorem}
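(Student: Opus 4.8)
The plan is to combine Lemmas~\ref{PK-lemma:inclusion} and~\ref{PK-lemma:exclusion} to show that a client can always correctly determine its revocation status before disclosing sensitive authentication material, regardless of which RL the (possibly malicious) verifier presents. First I would fix notation: let $p_x$ be the pseudonym the client intends to use, and let $RL_z$ be whatever revocation list the verifier supplies at authentication time. By Remark~\ref{PK-remark:list} the client can verify the CA signature on $RL_z$, so it is assured that $RL_z$ is a genuine list produced by the CA (though not necessarily the most recent one). The client then generates $\sigma_x^{RL_z} = \{\textsc{h}(seed_z)\}^{K_x^-}$ from the seed of $RL_z$ (Remark~\ref{PK-remark:proof}) and tests whether $\textsc{h}(\sigma_x^{RL_z}) \in \textit{RL}_z\textit{.sreat}$.

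The core of the argument is a case split on whether $p_x$ has been revoked. If $p_x$ is revoked, Lemma~\ref{PK-lemma:inclusion} tells us that every RL the CA publishes after the revocation — and in particular the genuine $RL_z$ the verifier must present — encodes $\sigma_x^{RL_z}$ in its set of revoked tokens; hence the client's membership test returns true, the client detects its revocation, and by Assumption~\ref{PK-assumption:revoked} it halts without sending $\langle x, \sigma_x^{RL_z}\rangle$, so no sensitive information is leaked. Conversely, if $p_x$ has not been revoked, then $\textsc{h}(\sigma_x^{RL_z})$ does not appear in $\textit{RL}_z\textit{.sreat}$ (the CA only inserts tokens of revoked pseudonyms, per Remark~\ref{PK-remark:list}), so by Lemma~\ref{PK-lemma:exclusion} the client proceeds and authentication succeeds as intended. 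In both cases the client's local check faithfully mirrors its true status, which is exactly the auditability guarantee.

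The one subtlety I would address explicitly — and which I expect to be the main obstacle — is the threat of a malicious verifier presenting a \emph{stale} $RL_z$ that predates $p_x$'s revocation, so that $\textsc{h}(\sigma_x^{RL_z})\notin\textit{RL}_z\textit{.sreat}$ even though $p_x$ is revoked. Here I would invoke the unique binding between token and seed: even if the client is tricked into proceeding and sends $\langle x,\sigma_x^{RL_z}\rangle$, by Remark~\ref{PK-remark:list} (uniqueness of seeds) and Theorem~\ref{PK-theorem:unlinkability} this token is useless against any other RL, so the verifier gains nothing that could later link $p_x$ to the client. Thus the strong form of auditability we claim is that a client never discloses authentication material that can be exploited against a revocation list on which it is actually revoked: on a stale $RL_z$ the client may proceed, but the disclosed $\sigma_x^{RL_z}$ cannot be tested against the newer list that revokes it, and on any current or future RL the client correctly detects the revocation and halts. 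I would close by remarking that this is precisely the property that distinguishes \mysystem from time-slot-based schemes, which require a fresh signature on the current slot and therefore couple auditability to clock synchronization.
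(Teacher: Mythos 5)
Your proof is correct and rests on the same two lemmas the paper uses, but it is organised differently and is in one respect more thorough. The paper's proof is a compact argument by contradiction: assume a client authenticates with a revoked $p_x$ against some RL; Lemma~\ref{PK-lemma:inclusion} forces $\sigma_x^{RL_z}$ into that RL's encoded set, Remark~\ref{PK-remark:list} rules out tampering, and Lemma~\ref{PK-lemma:exclusion} says the client could only have proceeded if the token were absent --- contradiction. You instead do an explicit case split on the client's true revocation status and, crucially, you isolate and handle the stale-RL scenario: a malicious verifier presenting a genuine but outdated $RL_z$ whose \textit{sreat} predates the revocation of $p_x$. The paper's formal proof does not confront this case head-on (Lemma~\ref{PK-lemma:inclusion} is phrased as a requirement on RLs issued \emph{after} revocation, so it does not literally cover an old RL that a verifier replays); the paper instead disposes of it informally in Section~\ref{secshort:system} by appealing to the seed binding. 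Your argument --- that even if the client is misled into sending $\sigma_x^{RL_z}$ against a stale list, uniqueness of seeds (Remark~\ref{PK-remark:list}) renders that token untestable against any newer RL that does revoke $p_x$ --- imports that informal discussion into the proof itself and makes explicit the precise (slightly weakened but honest) form of the guarantee: the client's local check is always faithful relative to the RL it was shown, and nothing it discloses can be exploited against an RL on which it is revoked. This buys a more self-contained and defensible proof; the paper's version buys brevity at the cost of leaving the stale-RL case to the prose outside the proof environment.
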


\begin{proof}
The proof, is by contradiction.  A client is not guaranteed auditability if it attempts to authenticate with a pseudonym $p_x$ that has been revoked. Let $p_x$ be a pseudonym used by some client to perform authentication against some revocation list RL. By Lemma~\ref{PK-lemma:inclusion}, if pseudonym $p_x$ has been revoked $\sigma_x^{RL_z}$ must be encoded in the RL's encoded set of revoked tokens. By Remark~\ref{PK-remark:list} RLs are immutable and cannot be tempered with by the verifier. By Lemma~\ref{PK-lemma:exclusion}, if the client authenticates,  it cannot find $\sigma_x^{RL_z}$ in the RL's encoded set of revoked tokens. A contradiction.
\end{proof}

\section{Implementation}
\label{sec:reduce}

\mysystem requires revocation tokens to be encoded using a one-way function, in order to create a revocation list that can be safely distributed to verifiers. Such one-way function must be efficient and produce compact results. We show how revocation lists can be implemented using Bloom Filters~\cite{bloom1970space}. We propose three alternative implementations of the RL that aim to reduce the amount of information that clients and verifiers exchange during each authentication process, while maintaining all the security properties inherent to the Revocation List, namely its integrity and authenticity:   1) an implementation based on a single Bloom Filter; 2) an implementation based on Hierarchical Bloom filter Arrays; and 3) an implementation based on Redactable Signatures. Another potential source of inefficiency is the computation of a new revocation list, that can be slow. This can be circumvented by a pre-computation strategy that we also describe in this section.  

\subsection{RLs based on a single Bloom Filter}

\begin{figure}[h!t]
    \centering
    \includegraphics[width=\columnwidth]{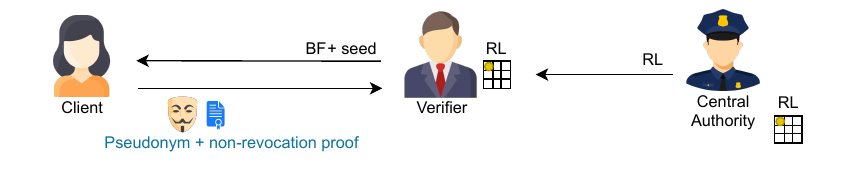}
    \caption[Authentication with a single  Bloom Filter Arrays.]{Authentication with a single BFs.}
    \label{fig:singleBF}
\end{figure}

Bloom Filters~\cite{bloom1970space} are a probabilistic structure designed to store members of a set and to determine if a given element is member. In pseudonym systems, verifiers often use Bloom Filters for the storage of the revoked pseudonyms. Bloom Filters are very efficient and have constant computation costs \textit{O}(1) for storing and searching elements. The filter is made up of \textit{N} addressable bits, with addresses from 0 to \textit{N}-1, being \textit{N} its size. Besides its size, a filter also has another parameter which is the number of hash functions, \textit{k}. When an element is to be inserted in the filter, it is hash coded using the hash functions, obtaining \textit{k} hash codes, that will be used as addresses inside the filter. After that, the bits of the filter with that addresses are set to 1. To test whether an element belongs to the filter, the element is hash coded using the hash functions obtaining \textit{k} addresses, the element is considered to be in the filter if and only if the bits in the filter with that addresses are all set to 1. When searching for an element, Bloom Filters may indicate that the element is in the filter when it is not, allowing false positives to happen. On the contrary, false negatives are not possible. The rate of false positives depends on the size of the filter, \textit{N}, the number of elements inserted, \textit{m}, and the number of hash functions used, \textit{k}. The rate of false positives can be calculated using the following formula:

\begin{equation}
P = (1-(1-\frac{1}{m})^{kn})^k
\end{equation}

\fakedescription{Pseudonym Revocation}
The certification authority (CA) creates a single Bloom filter (BF) and encodes all revocation tokens into the BF. Finally, the BF is signed by the CA and sent to the verifiers. The revocation list, RL, is then composed of, the bloom filter, BF, the seed and the signature.

\[
\textit{RL} = \langle 
\textit{BF}, 
\{\textit{digest}(\textit{BF, seed})\}^{K^-_{\textit{CA}}}, 
\textit{seed} 
\rangle
\]

\fakedescription{Access Control}
During the authentication process the client downloads the Bloom filter. The authenticity of the filter is verified by the client using the certification authority's signature. Then, the client chooses a pseudonym $p$ and checks if this pseudonym is not marked as revoked in the received filter. If the client concludes that it has not been revoked, it can then complete the authentication by sending the pseudonym $p$ and the corresponding proof to the verifier.

\subsection{RLs based on HBFAs}

Hierarchical Bloom filter Arrays \cite{hba} are a data structure based on Bloom Filters that aim to increase the efficiency when large Bloom Filters are used. These filters use a hierarchical structure composed of multiple Bloom Filters of different sizes (with the filters in the higher positions of the hierarchy being smaller than the filters in the lower positions), where the insertion/test of presence is done by utilizing multiple filters. Typically, a membership test in the data structure requires performing sequential presence tests across the various filters until a conclusion is reached. The Revocation List can be implemented using multiple Bloom Filters of different sizes, each containing exactly the same elements. These filters are organized sequentially and in ascending order of their size.

\fakedescription{Pseudonym Revocation}
During the revocation of a client, the certification authority creates a set of $n$ Bloom Filters. Then, all revocation tokens inserted into all Bloom filters. Finally, the central authority signs each of these filters with its private key, protecting the authenticity and integrity of each filter.
Subsequently, these filters, their respective digital signatures, and the generated seed are aggregated to form the revocation list which is sent to the verifiers: 

\[
\textit{RL} = \left\langle 
\left[ \textit{BF}_1, \ldots, \textit{BF}_n \right], 
\left[ 
\{\textit{digest}(\textit{BF}_1, \textit{seed})\}^{K^-_{\textit{CA}}}, \ldots, 
\right.
\right.
\]
\[
\left. 
\{\textit{digest}(\textit{BF}_n, \textit{seed})\}^{K^-_{\textit{CA}}} 
\right], 
\textit{seed} 
\rangle
\]

\fakedescription{Access Control}
During the authentication process, represented in Figure~\ref{fig:hfilters}, the client starts by downloading the first BF from the Revocation List, which is the one with the smallest size, as well as the random seed, \textit{seed}, associated with the Revocation List. Then, it chooses a pseudonym $p$ and checks if this pseudonym is marked as revoked in the previously received filter. Since the false positive rate of a BF varies with its size, and the client begins by receiving the smallest filter, it is not unlikely that the client encounters a false positive at this step. In that case, the client will then download the remaining filters from the Revocation List in increasing order of size until it reaches a filter where the pseudonym is not marked as revoked. If it reaches the last filter and the pseudonym is marked as revoked in that filter, the client assumes that it has indeed been revoked and cancels the authentication. The authenticity of each filter is verified by the client using the certification authority's signature. If the client concludes that it has not been revoked, it can then complete the authentication by sending the pseudonym $p$ and the corresponding proof to the verifier.

\begin{figure}[h!t]
    \centering
    \includegraphics[width=\columnwidth]{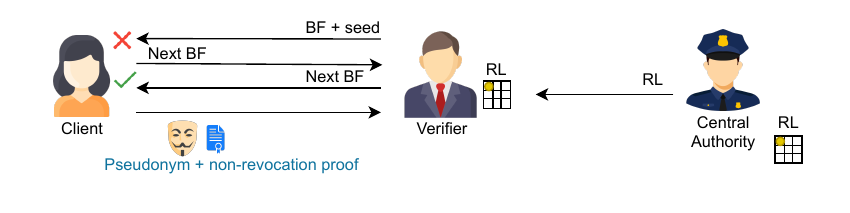}
    \caption[Authentication with Hierarchical Bloom Filter Arrays.]{Authentication with HBFAs.}
    \label{fig:hfilters}
\end{figure}

HBFA are advantageous in cases where the client can confirm that the pseudonym has not been revoked by checking the first few filters (ideally, relying only on the smallest filter in most cases); however, it may worsen the original solution if the client has to download several filters. In most cases, the client would be able to authenticate using the first pseudonyms by checking only the smaller filters, making this an efficient solution overall.

\subsection{RLs based on Redactable Signatures}

Redactable Signatures~\cite{johnson2002homomorphic} enable the sending of a digitally signed message, with the ability to delete certain parts of it, while still allowing the recipient to verify the security properties of the message, such as integrity and authenticity. The process of creating an editable signature begins by dividing the message into removable parts. A Merkle Tree is constructed by associating each part of the message with a leaf node of the tree, calculating its cryptographic hash, and then recursively building the tree up to the root node, where each node's value is the hash of the concatenation of its child nodes' hashes. After that, the sender digitally signs the root node.

When sending the message, the sender deletes the parts it wish to omit and sends the remaining parts to the recipient. Along with the message, it also send the hashes of the leaf nodes corresponding to the deleted parts. These hashes can be compressed using the tree structure by sending the hashes of internal nodes. The recipient uses the received information to reconstruct the hash of the root node and verify the sender's digital signature on that node. The  signatures also incorporate a random component during the signature creation process, which prevents the recipient from recovering the deleted parts through brute force attacks.

We use these ideas to derive an implementation of the RL where revoked pseudonyms are first inserted in a Bloom filter and then the filter is encoded as a redactable signature, such that only the relevant parts of the Bloom filter need to be sent to the client.

\textit{Pseudonym Revocation:} In this implementation, the revocation list is first encoded as a Bloom filter, which is then divided into multiple segments of configurable size. The cryptographic hash of each segment is associated with the child nodes of a Merkle tree. The remaining nodes of the tree are generated recursively by hashing the concatenation of the hashes of the child nodes. Finally, the central authority generates a digital signature $\sigma$ by concatenating the root hash, \textit{root\_hash}(\textit{BF}), with the random seed associated with the Revocation List, as illustrated in Figure~\ref{fig:merkleTree}. The generated elements form the RL which is then distributed to the verifiers:

\[
\textit{RL} = \langle 
\textit{BF}, 
\{ \textit{digest}(\textit{seed}, \textit{root\_hash}(\textit{BF})) \}^{K^-_{\textit{CA}}}, 
\textit{seed} 
\rangle
\]

\begin{figure}[h!t]
    \centering
    \includegraphics[width=0.8\columnwidth]{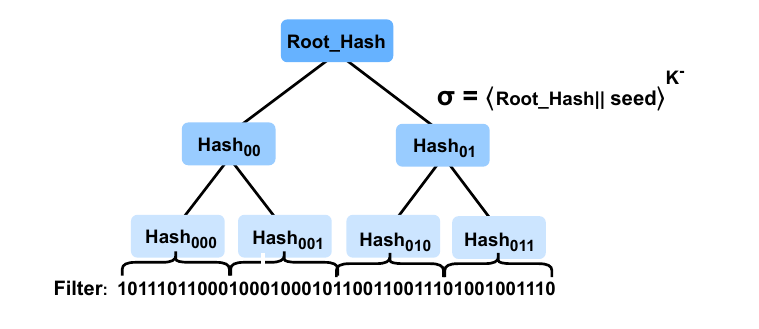} 
    \caption{Redactable Signature Creation.}
    \label{fig:merkleTree}
\end{figure}

\fakedescription{Access Control}
In the access control phase, shown in Figure \ref{fig:rSignatures}, the verifier begins by sending the random seed to the client.
With this information, the client selects a pseudonym $p$ that it has not used yet and with that pseudonym $p$, computes the non-revocation proof, 
calculates the exact positions of the BF that needs to verify to attest their revocation status,  and requests these positions.  The verifier checks if the selected bits are set to ``1''. If all the bits are set, this means that the client has been revoked. In this case, the verifier simply notifies the client that it should not proceed with the authentication. If there is at least one bit that is not set, the client has not been revoked. In this case the verifier selects one of the segments that have a bit  set to ``0''. It sends this segment along with the necessary hashes for the client  to reconstruct the path from that segments to the root node. The client can then check that segment has not been tempered by the verifier, given that any alteration would necessarily affect the root hash. If the client concludes that it has not been revoked, it completes the authentication process by sending their pseudonym $p$ and the corresponding proof to the verifier.

\begin{figure}[h!t]
    \centering
    \includegraphics[width=\columnwidth]{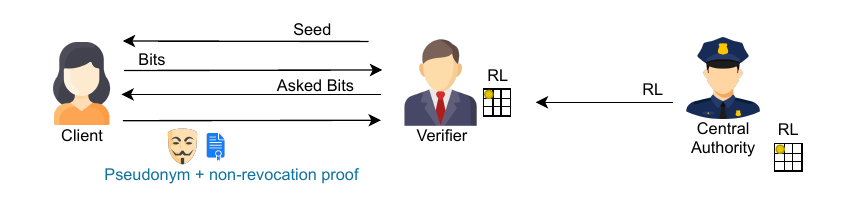} 
    \caption[Authentication with Redactable Signatures.]{Authentication with Redactable Signatures.}
    \label{fig:rSignatures}
\end{figure}

\subsection{Precomputing the RL} 

Since the most resource-intensive part of our system is the construction of the RL from scratch every time a new client is revoked, we have implemented an optimization to accelerate this process. This optimization hinges on the CA consistently maintaining a precomputed RL. In other words, the CA consistently prepares in advance a new version of the RL, selecting a random seed and generating all the revoked access tokens associated with that RL for all previously revoked users. When there is a need to revoke a new client, it is merely a matter of generating the revoked access tokens for the new client and inserting them into the encoded set of revoked access tokens. This precomputation significantly reduces the time required to generate a new RL and to publish it, making revocation more effective.

\section{Evaluation}
\label{secshort:eval}

We evaluate \mysystem in two orthogonal dimensions. Firstly, we assess the time it takes to construct a revocation list. Then, we assess the efficiency of authentication when using the different implementations presented in Section~\ref{sec:reduce}. For our evaluation, we used an Intel NUC10i7FNB, with an Intel i7-10710U CPU with Intel SGX, 16GB RAM, and Ubuntu 20.04 LTS, a setting similar to other papers that also target edge networks~\cite{paper_CCS}.

The  code used for the evaluation is open source and available in the following anonymous Git repository: \url{https://github.com/LaraAuth12/LARA}.

\subsection{Time to Generate a RL}

\subsubsection{Single Filter}

We compare the time required to create a revocation list with \mysystem against the two main solutions that offer backward unlinkability in public key encryption, namely Haas et al.~\cite{haas2011efficient} and RRPs~\cite{paper_CCS}, despite the fact that these systems do not provide revocation auditability, a distinctive feature of our system. Additionally, to ensure a fair comparison, we selected the ideal parameters for RRPs, an epoch of one month and a time slot of one minute. This means that for Haas et al. it requires at least one different pseudonym for each minute of the month.

\begin{figure}[h]
  \centering
    \centerline{\includegraphics[width=\columnwidth]{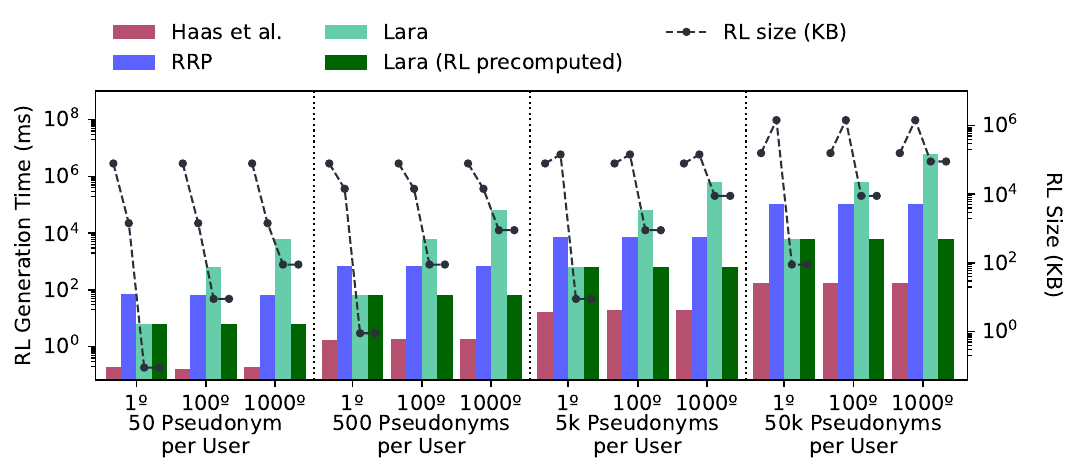}}
    %usecase.png}}
  \caption{RL Generation Time.}
  \label{fig:eval}
\end{figure}

We compare the the time required to generate a new version of the RL and the size this list can reach. Figure~\ref{fig:eval} shows the latency required to create a new version of the RL for the first revoked user, the one-hundredth, and the one-thousandth, while varying the number of pseudonyms each client owns. This figure depicts value obtained with the  implementation that uses a single Bloom filter. It is important to note that with each new revocation, we need to recreate the RL from scratch and generate the necessary information for all previously revoked clients. Therefore, the more clients have been revoked, the longer the latency to create an RL will be. Using the strategy of precomputing the RL, \mysystem latency similar to Haas et al., with a worst-case time of just under 10 seconds to generate. Another visible advantage of our system is that the size of the RL is much more efficient, maintaining the same flexible properties as RRPs. In our scheme, clients store only the desired number of pseudonyms, and the revocation only requires a single token per pseudonym, while RRPs need $\log(\frac{epoch}{slot})$ of revocation data in the RL per pseudonym. It's worth noting that although Haas et al. is more efficient, it offers fewer security properties and flexibility, as it forces clients to carry many pseudonyms, even if they do not need them.

\subsubsection{HBFA}

\begin{figure}
  \centering
  \includegraphics[width=\columnwidth]
  {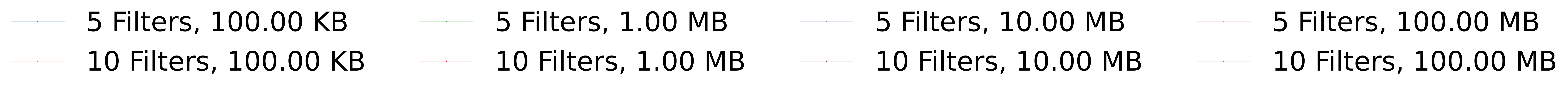}
  \centering
  \begin{subfigure}[b]{\columnwidth}
    \includegraphics[width=\columnwidth]{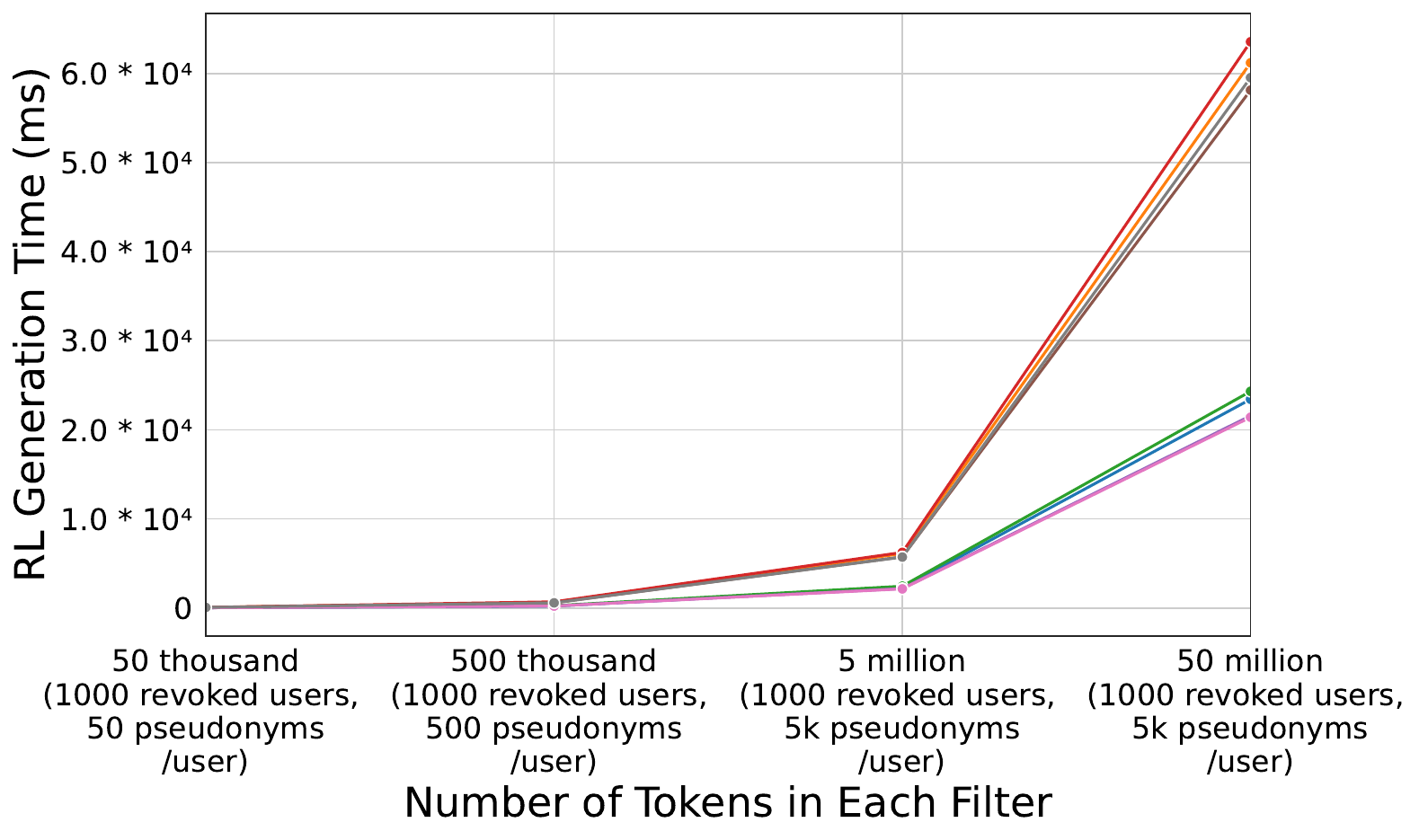}
    \caption{5 hash functions}
    \label{fig:HBFA_Latencya}
  \end{subfigure}
  
  \begin{subfigure}[b]{\columnwidth}
  \includegraphics[width=\columnwidth]{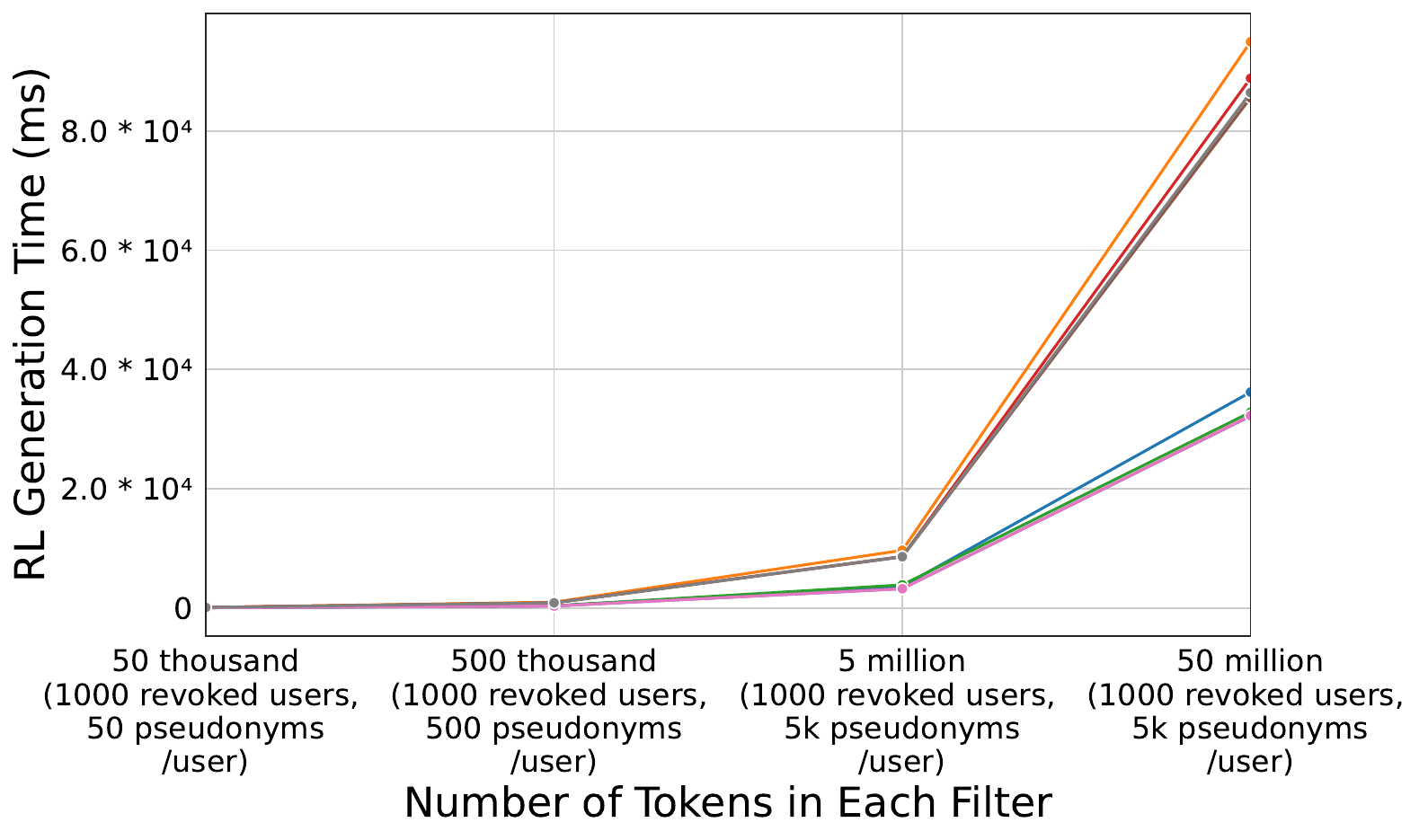}
    \caption{10 hash functions}
    \label{fig:HBFA_Latencyb}
  \end{subfigure}
  \caption{HBFA's Overhead}
  \label{fig:HBFA_Latency}
\end{figure}

Figure~\ref{fig:HBFA_Latency} illustrates the results of a series of tests we conducted to assess the additional overhead involved in generating the revocation list using the Hierarchical Bloom Filter Arrays technique, as well as its impact on the latency required to generate a revocation with this approach.

There are several factors that may impact the necessary additional overhead in generating a revocation list with multiple Bloom filters, instead of just one. We have conducted a series of tests to measure the additional latency this would take, varying parameters such as: the number of filters that our revocation list was composed of, the number of hash functions of each filter, the size of those filters and the number of tokens to be inserted in each filter.

By analyzing our results it is possible to notice that the number of hashes, the number of tokens to be inserted and the number of filters, have a linear impact in the additional overheard, doubling the latency when doubling the number of hashes or the number of tokens, due to the number of bits that are necessary to set to "1" also double. We can also conclude that the size of the Bloom filters do not have a significant impact on this metric, due to the nature of Bloom filters being highly efficient and constant with respect to their operations.

Also, we can conclude that, depending on the system scale, the additional overhead can be negligible, being only a few milliseconds for most cases until 5 million tokens, starting to impose a serious overhead when we hit the scale of 50 million tokens, reaching the magnitude of 60 seconds or more.

\subsubsection{RS}

\begin{figure}[h]
  \centering
    \centerline{\includegraphics[width=\columnwidth]{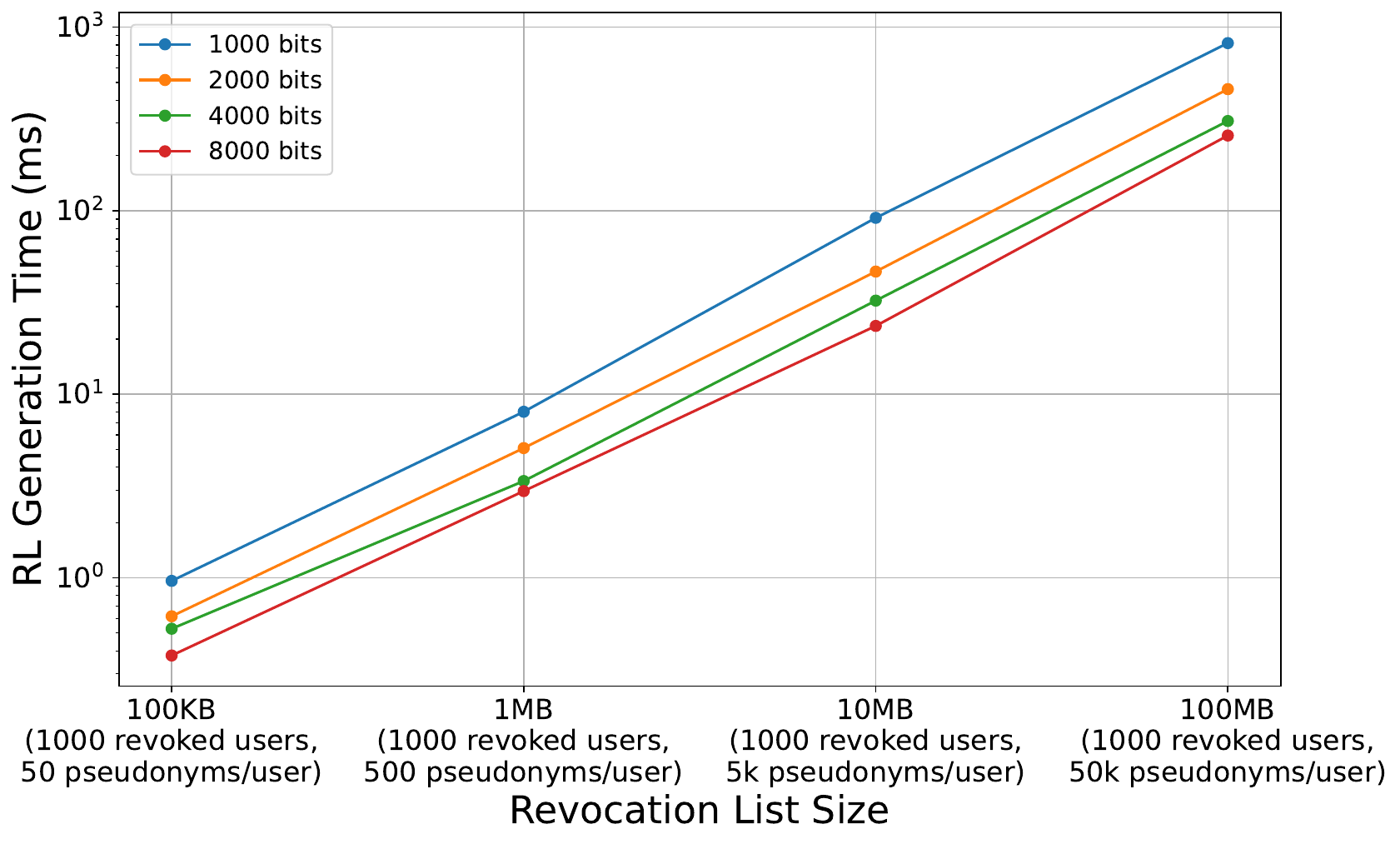}}
    %usecase.png}}
  \caption{Redactable Signature Overhead.}
  \label{fig:redactSigTime}
\end{figure}

To quantify the overhead introduced by our Redactable Signature technique, we conducted experiments measuring the time required to generate a signature over a revocation list, varying both the size of the Bloom filter and the size of the chunks into which it was divided.

As shown in Figure~\ref{fig:redactSigTime}, the signature generation time increases linearly with the Bloom filter size, as expected. Additionally, using smaller chunks results in longer generation times due to the larger number of leaf nodes in the Merkle tree, which increases the number of required hash computations. In contrast, larger chunks yield smaller trees and thus lower computational overhead. The latency required to generate a redactable signature ranges from just 1 millisecond or less for Bloom filters of 100KB to hundreds of milliseconds for filters as large as 100MB.

This highlights a trade-off between redaction granularity and efficiency. As discussed later (see Figure~\ref{fig:varParaAss2}), larger chunks reduce signature generation time but require more data transfer during authentication. This trade-off becomes particularly relevant for Bloom filters exceeding 100MB. For smaller filters, generation time remains below one second across all chunk sizes, making the overhead negligible in practice.

\subsubsection{Comparison}

Based on the previous, we can assess the impact of our proposed techniques on revocation list generation time. The RS-based technique introduces minimal overhead—adding no more than a single extra second for Bloom filters smaller than 100MB. In contrast, the overhead introduced by the HBFA-based technique is significantly influenced by the number of tokens added to the revocation list. In the worst-case scenario analyzed (1,000 revoked users with 50,000 pseudonyms each), the baseline generation time using a single Bloom filter exceeds 1,000 seconds. The RS-based method increases this by less than one second—an overhead of under 0.1\%. Meanwhile, the HBFA-based approach adds approximately 60 seconds for Bloom filters using 5 hash functions, corresponding to an overhead of about 6\%.

\subsection{Authentication Performance}

\subsubsection{Using a single BF}

\begin{figure}[h]
  \centering
    \centerline{\includegraphics[width=\columnwidth]{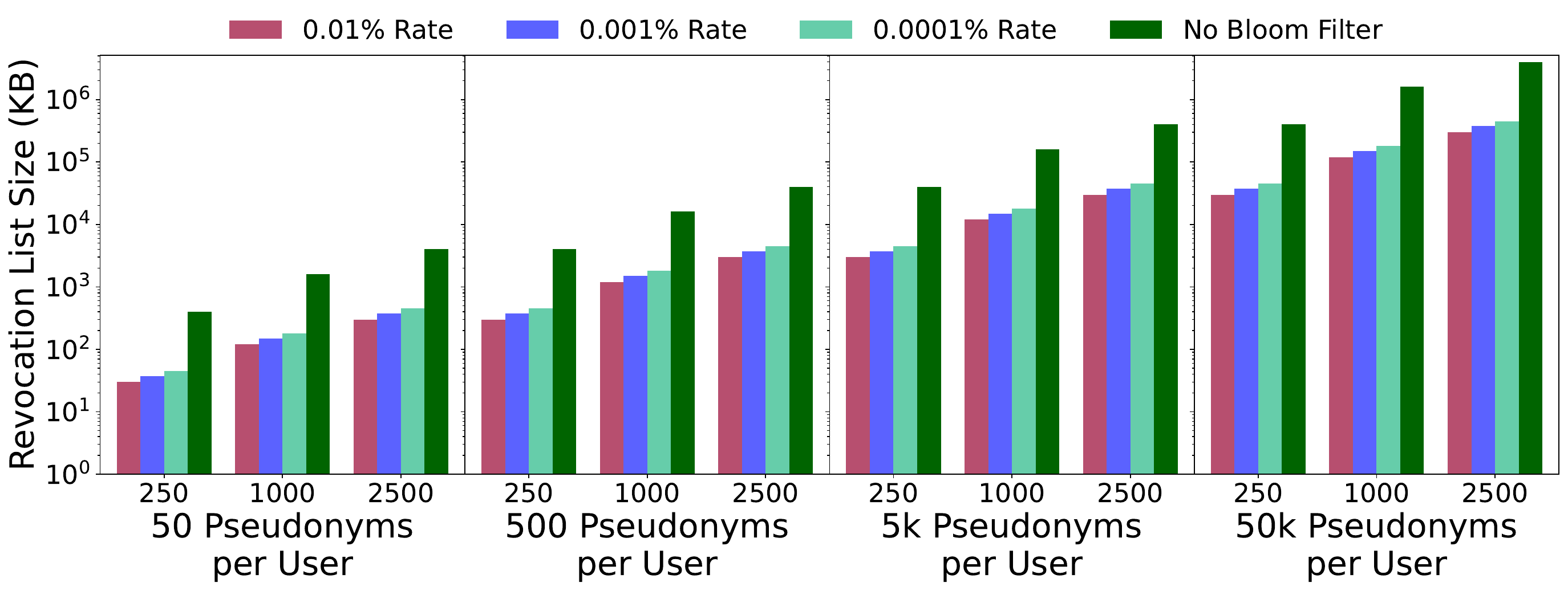}}
    %usecase.png}}
  \caption{RL Size Growth.}
  \label{fig:RLSizeGrowth}
\end{figure}

To evaluate authentication using a single Bloom Filter (BF), we analyzed the size of revocation lists while varying the system scale and the Bloom Filter properties used to implement the revocation list. The size of a Bloom Filter, given a fixed number of inserted elements, is primarily influenced by two factors: the false positive rate and the number of hash functions. To minimize the Bloom Filter size, we calculated the optimal number of hash functions for each scenario. Figure~\ref{fig:RLSizeGrowth} presents the results of our tests across different scenarios, comparing various false positive rates, and evaluating the impact of using a Bloom Filter versus not using one.

Our study demonstrates that Bloom filters significantly reduce the size of revocation lists. We tested Bloom filters with false positive rates of 0.01\%, 0.001\%, and 0.0001\%, showing their effectiveness in minimizing the size of the revocation list. In systems like these, revocation lists grow linearly depending on the number of pseudonyms they contain. In the worst-case scenario we tested—2500 revoked users and 50,000 pseudonyms per user, simulating highly dynamic environments—the use of Bloom filters reduced the size of the revocation list by up to 90\%. In this case, maintaining a revocation list without a Bloom filter required 4GB, whereas using a Bloom filter reduced this to approximately 0.4GB.

As expected, revocation list's size decreases as the allowed false positive rate increases. However, a higher false positive rate can impact authentication performance for non-revoked users. False positives can increase authentication latency and cause certain pseudonyms assigned to legitimate users to become unusable. Therefore, selecting an appropriate trade-off between storage efficiency and authentication performance is essential.

\subsubsection{Using a HBFA}

HBFA uses multiple filters of different sizes, with several parameters that directly influence the amount of information to be transferred during authentication. Thus, we evaluated the following parameters: 1) reduction factor between each filter, starting from the largest/original filter down to the smallest one. 2) the number of filters with different sizes used in the hierarchy, i.e., how many times we reduce the largest filter by the chosen factor. 3) the false positive rate that we accept in the largest filter of the hierarchy. In Figure~\ref{fig:varParaHier}, we vary these parameters and calculate the expected amount of information related to Bloom Filters that needs to be transferred during authentication for each of these configurations, keeping the size of the largest filter fixed at 0.5GB.

\begin{figure}[t]
  \centering
  \includegraphics[width=\columnwidth]{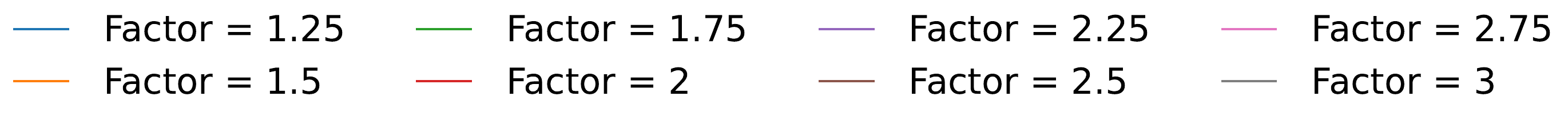}
  
  \ 
  \centering
  
  \begin{subfigure}{0.65\columnwidth}
    \includegraphics[width=\columnwidth]{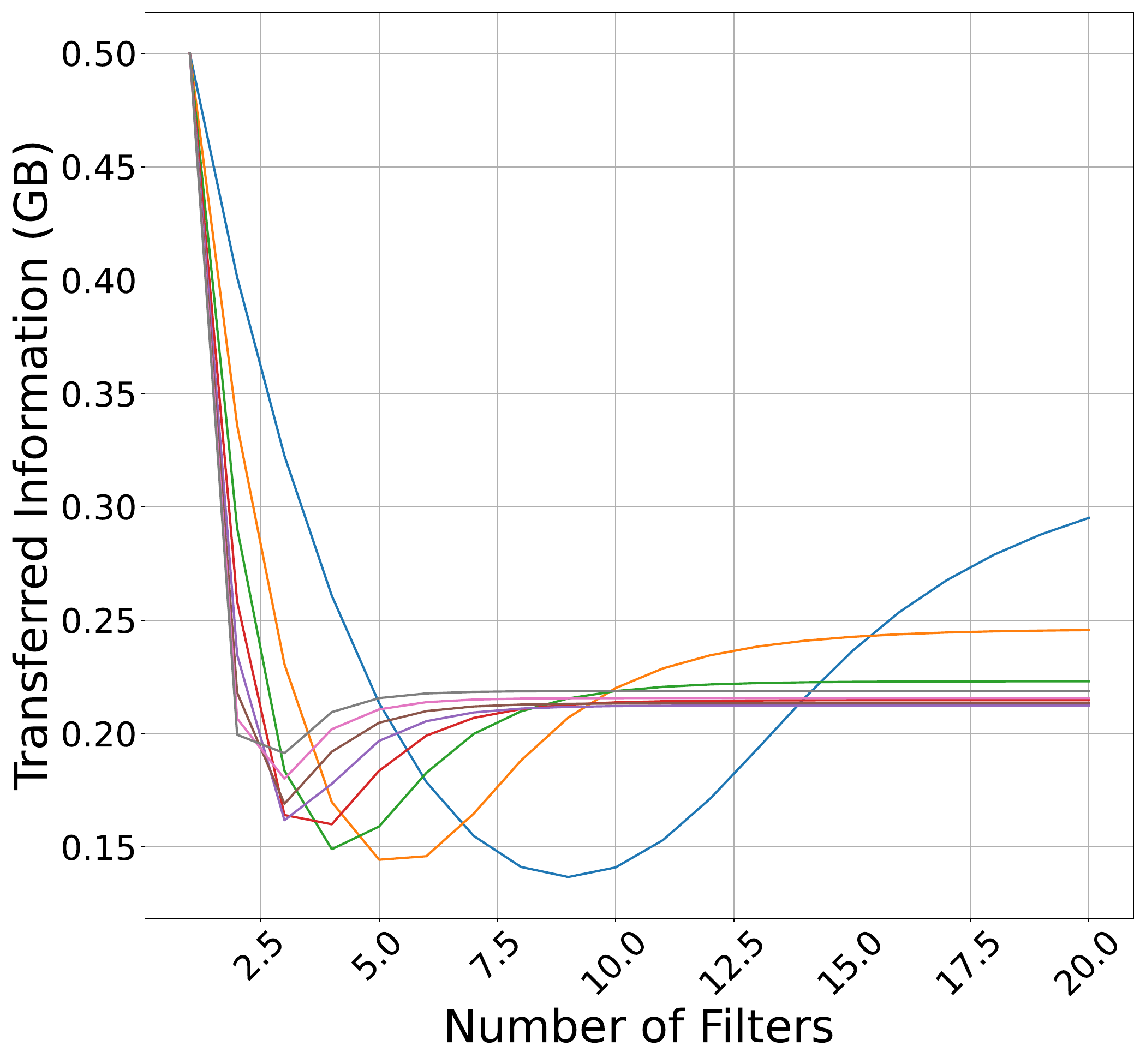}
    \caption{$0.1\%$}
    \label{fig:0.001}
  \end{subfigure}

  \begin{subfigure}{0.65\columnwidth}
    \includegraphics[width=\columnwidth]{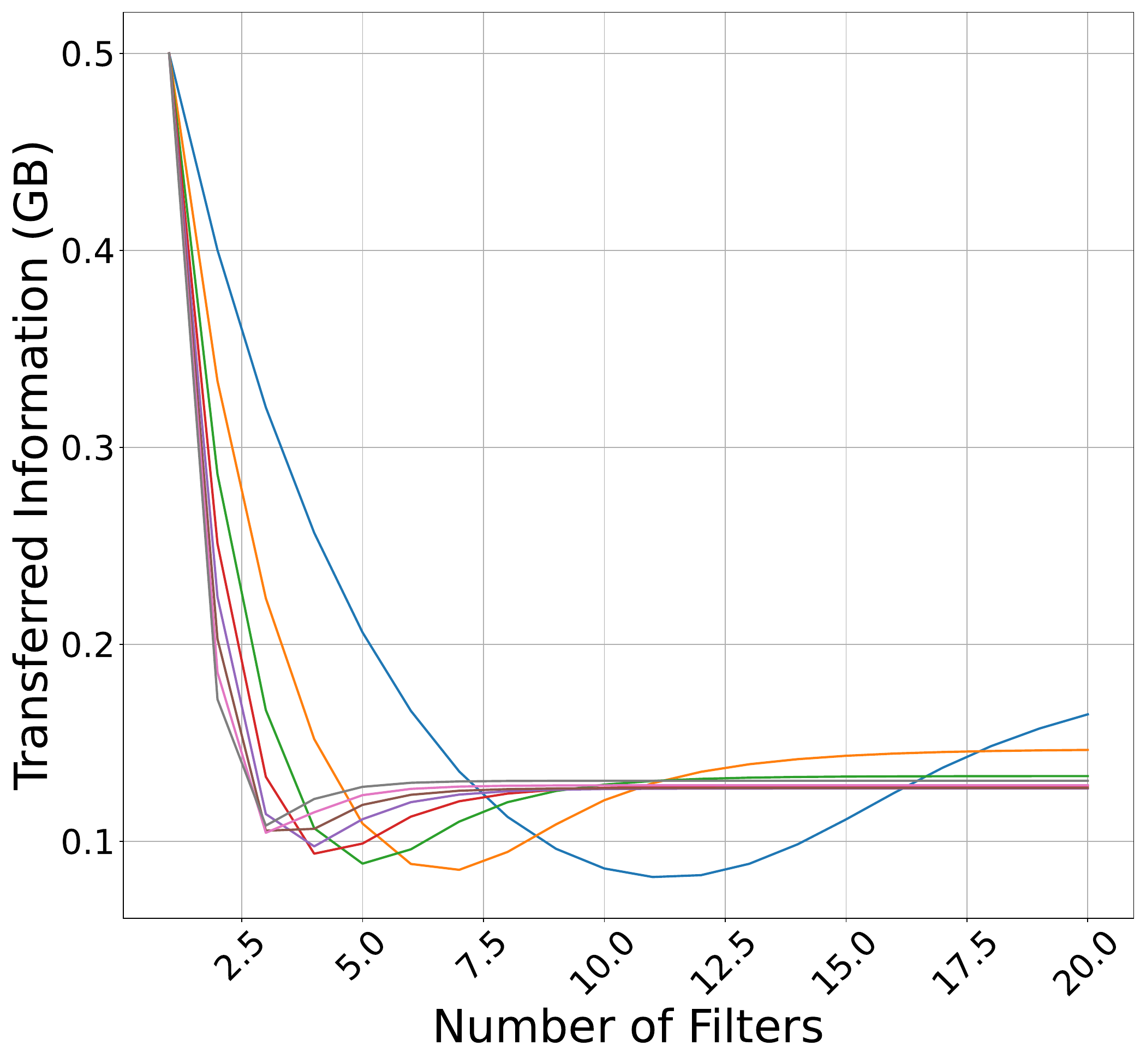}
    \caption{$0.01\%$}
    \label{fig:0.0001}
  \end{subfigure}

  \begin{subfigure}{0.65\columnwidth}
    \includegraphics[width=\columnwidth]{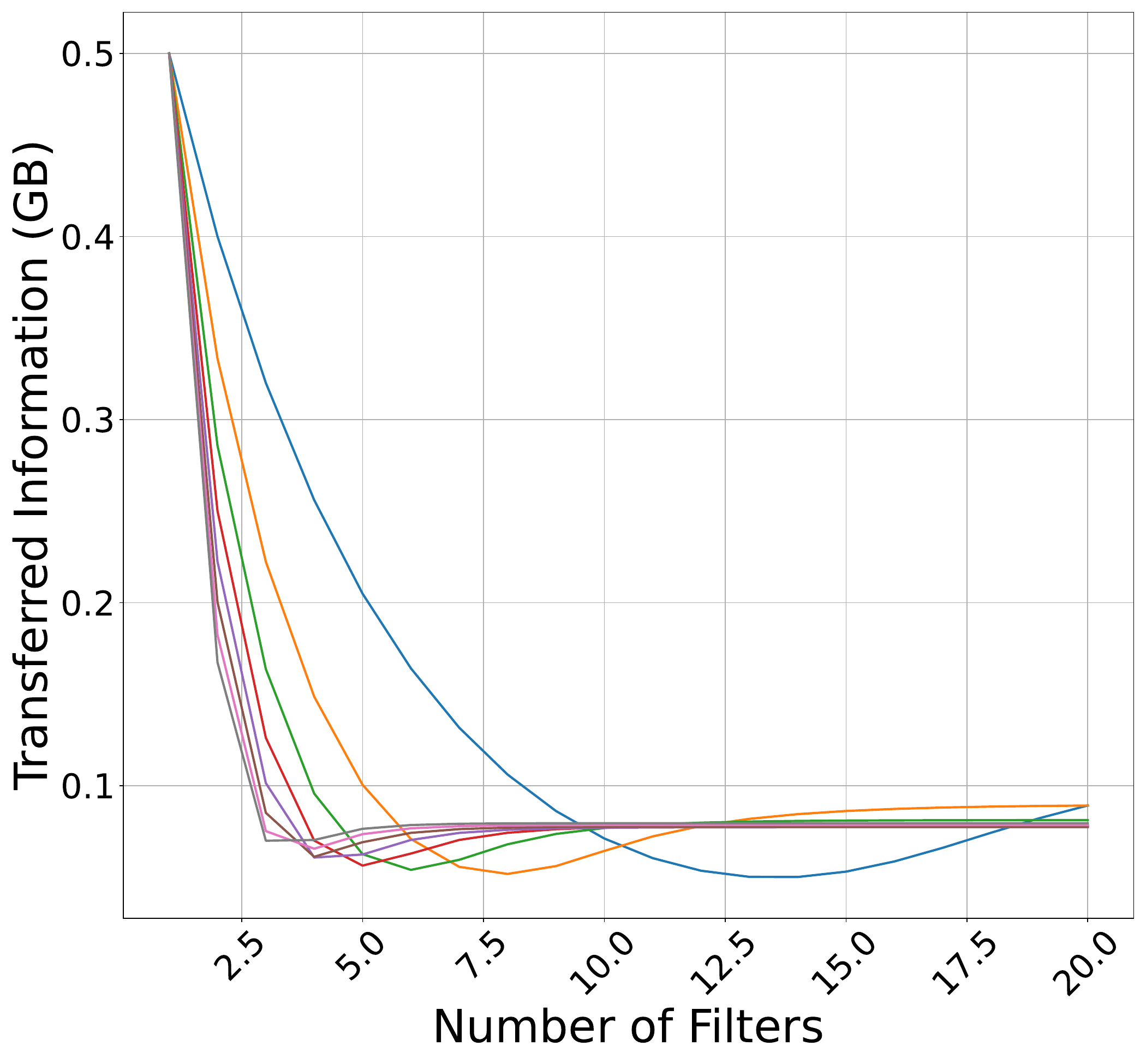}
    \caption{$0.001\%$}
    \label{fig:0.00001}
  \end{subfigure}
   \caption{Different Target False Positive Rates}
   \label{fig:varParaHier}
\end{figure}

It is worth noting that calculating the expected amount of transferred information is not trivial, as it depends on the number of filters the client needs to transfer before authenticating. To evaluate each set of parameters, we calculated the amount of information transferred for each scenario (in each scenario, the client needs to download a different number of filters before completing the audit process), and we calculated the expected value of the amount of information transferred by the client using the false positive rates of each BF to compute the probability of each scenario.

It can be observed that, when there is only one filter, the information required to transfer is always 0.5GB, representing the original filter. However, when we create more filters, the average amount of information transmitted quickly decreases to less than half with more than 5 filters. Furthermore, we observed that the smaller the reduction factor, the lower the expected amount of information transmitted during each authentication. However, it is necessary to increase the number of filters used to reach the minimum amount of information transmitted.

We also observed that the transferred information decreases when we reduce the false positive rate. This is due to the fact that reducing the rate also affects the smaller filters, as all filters will have lower false positive rates and a greater number of clients will authenticate using the first filters. Note that, in order to reduce the false positive rate while keeping the filter size fixed, it is necessary to reduce the number of items in the filter.

A good configuration for a false positive rate of 0.001\% would be to choose a factor of 2 with 4 filters, as this is a point that minimizes the information to be transmitted over the network while maintaining a reasonable number of filters. Our technique drastically reduces the amount of information transferred, in particular, transferring only about 10\% of the original amount of information on average.

\subsubsection{Using Redactable Signatures}

\begin{figure}
  \centering
  \includegraphics[width=0.9\columnwidth]
  {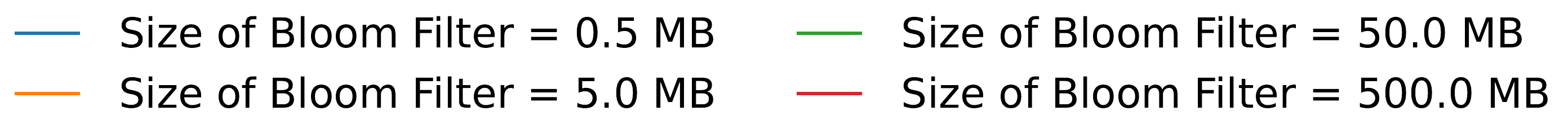}
  
  \centering
  \begin{subfigure}[b]{0.7\columnwidth}
    \includegraphics[width=0.9\columnwidth]{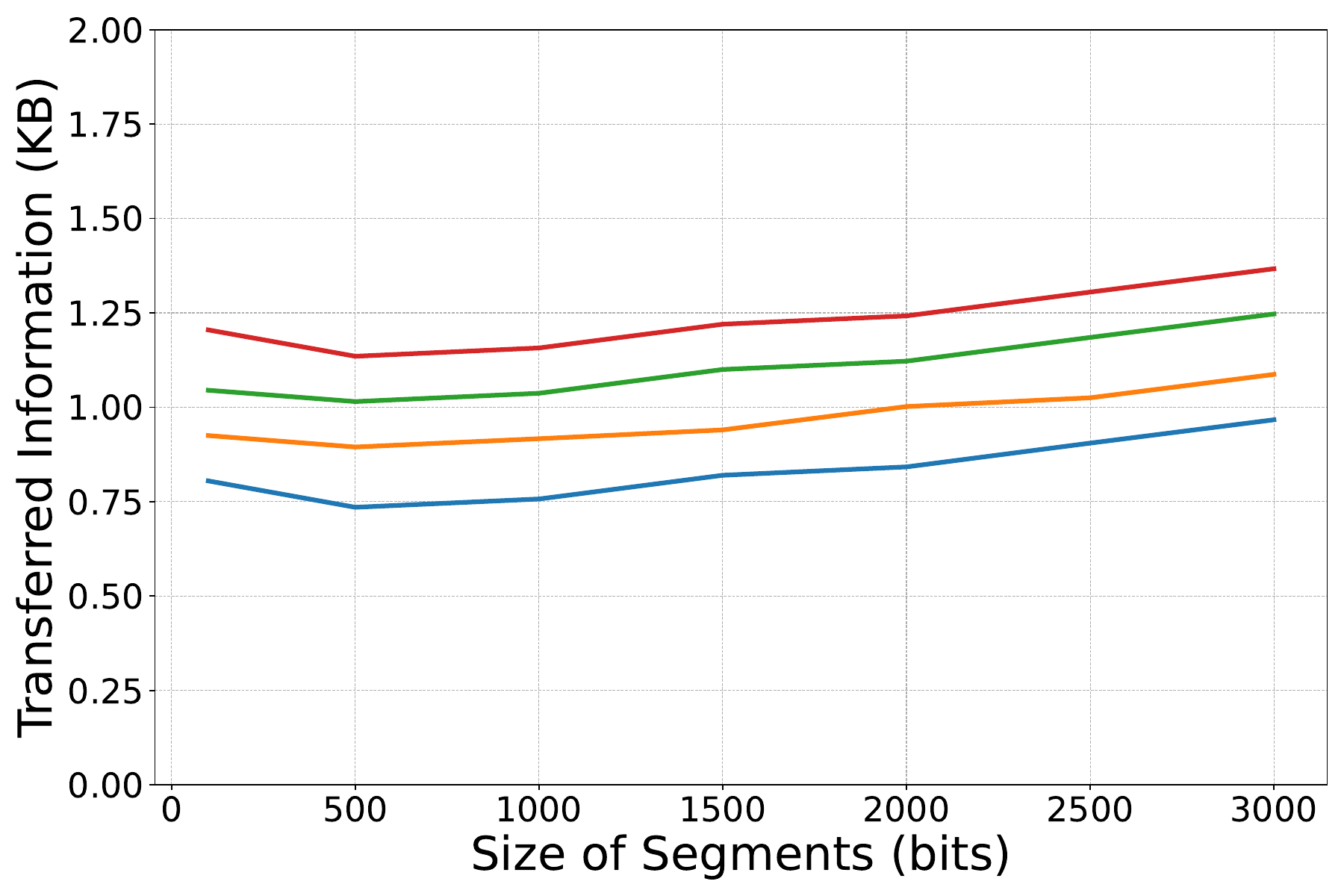}
    \caption{5 hash functions}
    \label{fig:5hashes}
  \end{subfigure}

  \begin{subfigure}[b]{0.7\columnwidth}
    \includegraphics[width=0.9\columnwidth]{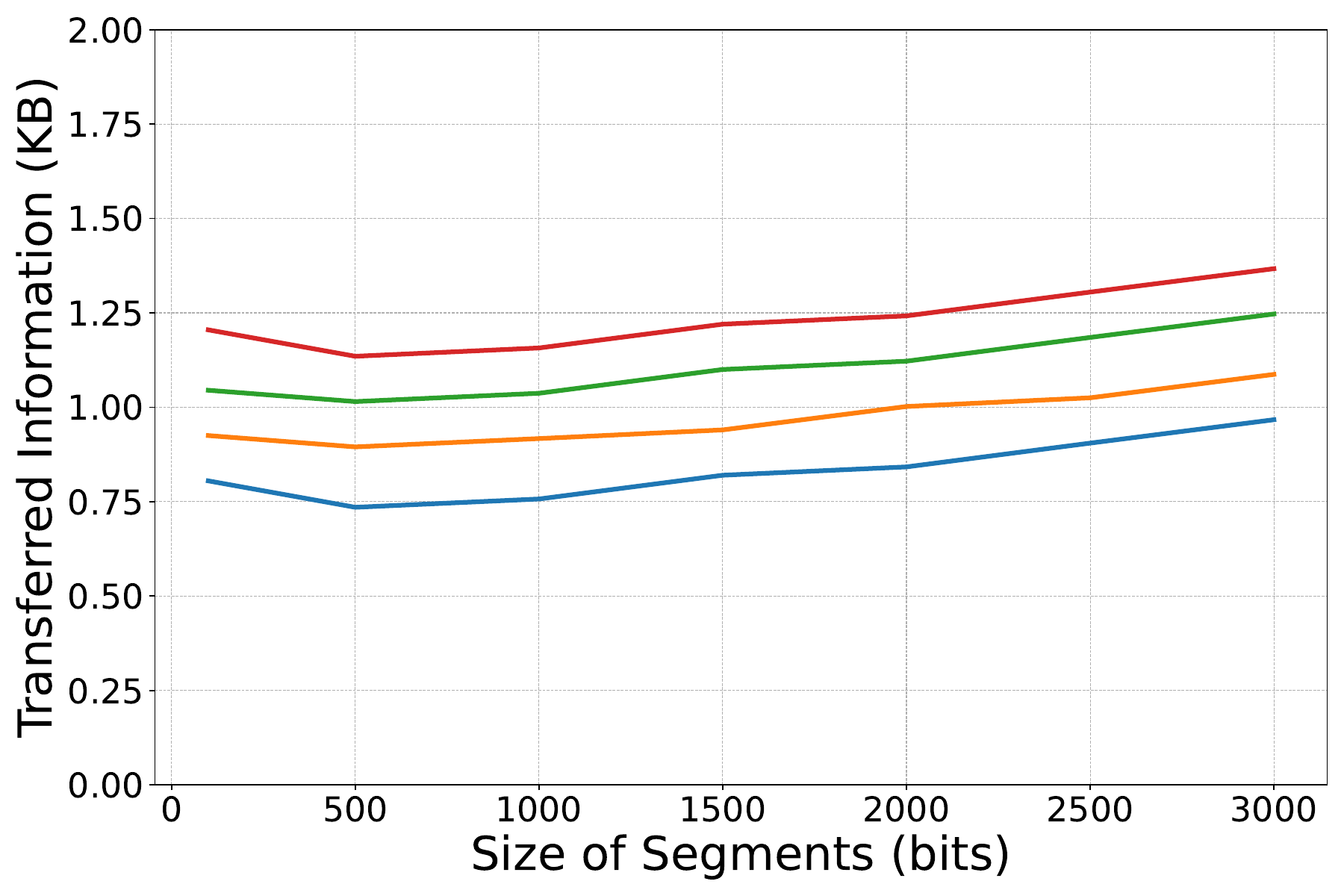}
    \caption{10 hash functions}
    \label{fig:10hashes}
  \end{subfigure}
  \caption{Amount of data exchanged}
  \label{fig:varParaAss2}
\end{figure}

Redactable Signatures use two main structures: a Bloom filter and a Merkle Tree, which is constructed from the filter. The information transferred during authentication consists mainly of segments of the BF and various hashes from the Merkle Tree.

Several parameters of the solution impact the amount of information transferred during authentication. These parameters include: 1) the number of hash functions, which directly influences the number of filter segments sent to the client, and 2) the size of the segments into which the filter is divided, since the segment size determines the number of filter segments that will be associated with the Merkle tree's leaf nodes, affecting the size of the Merkle tree and the number of hashes sent to the client.

In Figure~\ref{fig:varParaAss2}, we varied these parameters and measured the average amount of information that the verifier sends to the client during authentication. Given that only a segment is transferred to the client, the number of hash functions used does not increase the amount of information transferred during the authentication. Thus, it is possible to increase the number of hash functions of the BF, thus, reducing the false positive rate, without increasing the amount of information transferred. We can observe that the optimal configurationis  to divide the BF into segments of 500 binary digits. With this configuration,  the amount of information sent to the client is just 125MB.

\subsubsection{Comparison}

In this section, we compare the authentication performance of the three proposed \mysystem implementations. To ensure a fair comparison, we chose configurations that minimize the amount of information exchanged during authentication between verifiers and clients, based on the analysis presented earlier.

For this evaluation, we developed a set of tests in which we measured the average authentication latency of a client, varying the size of the BF that implements the Revocation List. In the case of the Hierarchical Bloom filter Arrays technique, this size corresponds to the size of the last and largest filter. For the evaluation of the technique with Hierarchical Bloom filter Arrays, we used the following configuration: a reduction factor of 2, which means that each filter is half the size of the next filter, with the number of filters in the set fixed at 4, and 5 hash functions in each filter. We considered a false positive rate of 0.01\% for the last BF, which we then used to calculate the false positive rates for the other filters. For the evaluation of the Redactable Signatures-based technique, we fixed the segment size at 500 binary digits and the number of hash functions at 5, measuring the authentication latency.

\begin{figure}[t]
    \centering
    \includegraphics[width=0.7\columnwidth]{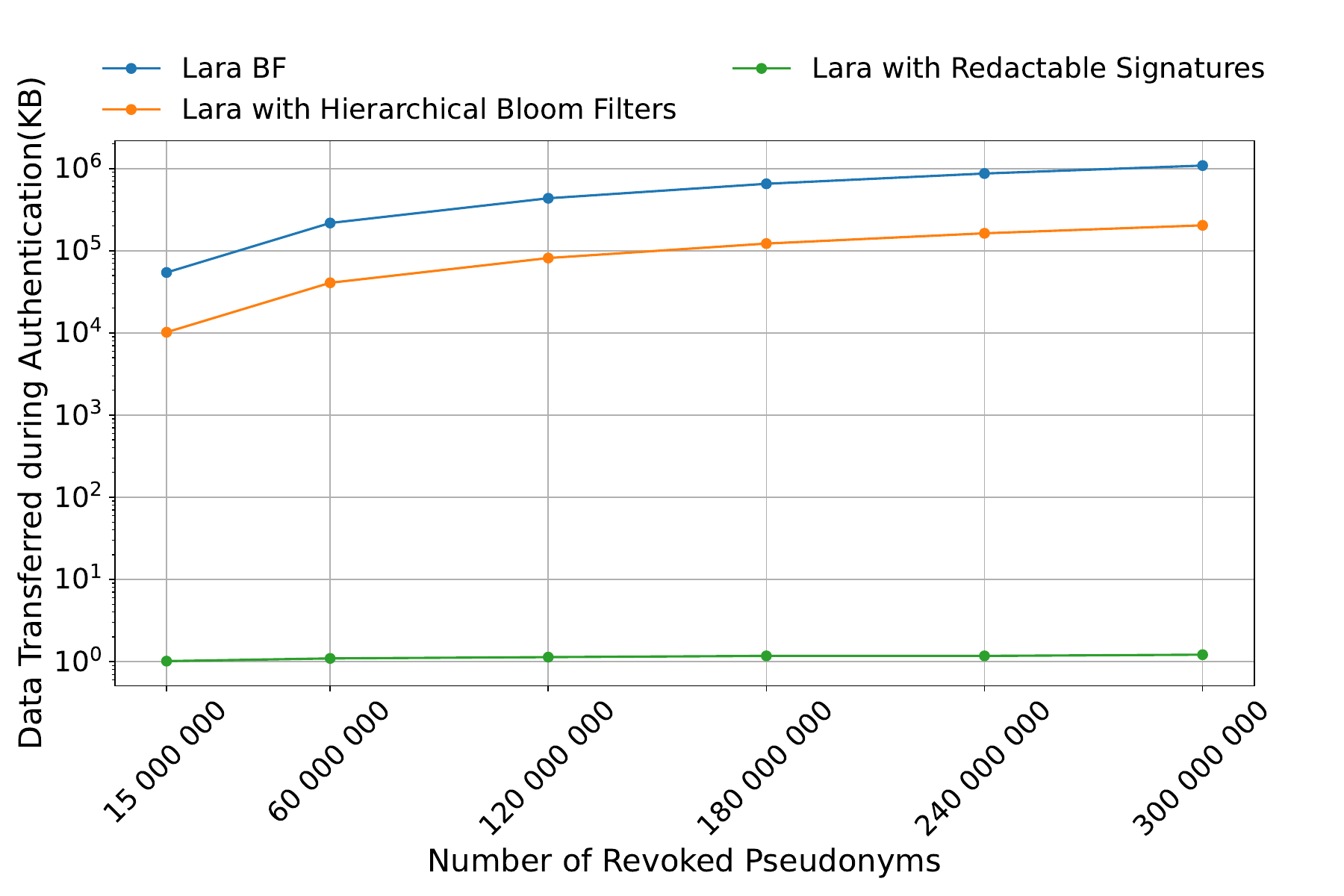}
    \caption{Data Transferred vs \# Revoked Pseudonyms}
    \label{fig:lastPlot}
\end{figure}

Figure~\ref{fig:lastPlot} shows the amount of data transferred as a function of the total number of revoked pseudonyms. It can be observed that HBFAs provide limited gains over the use of  single BF. In turn, the  implementation based on redactable signatures is highly efficient and allows  clients to perform an authentication by transfering a RL smaller than 1KB.

\begin{figure}[t]
  \centering
  \begin{subfigure}[b]{0.7\columnwidth}
    \includegraphics[width=\columnwidth]{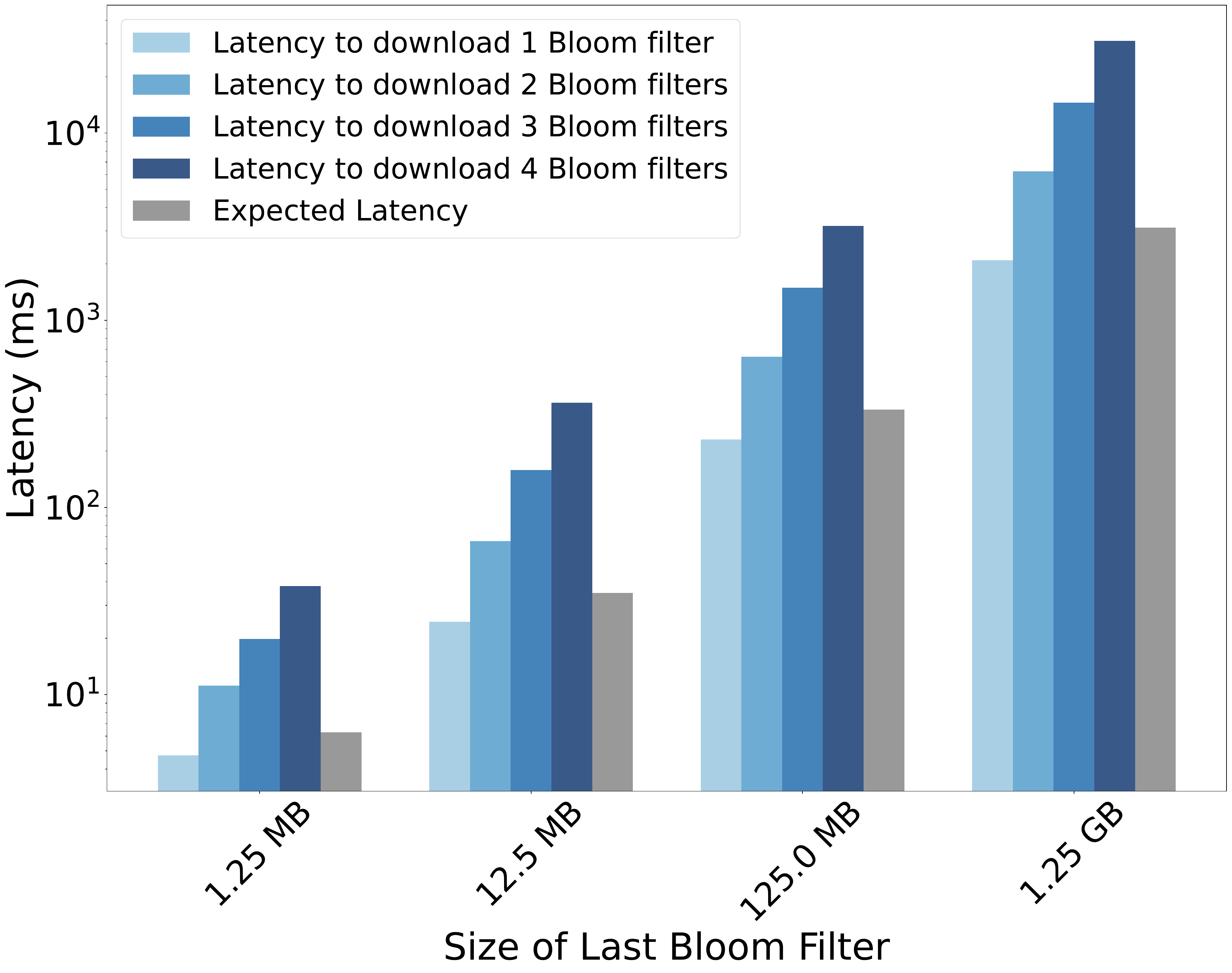}
    \caption{Latency using HBFA}
    \label{fig:cascade_b_filters}
  \end{subfigure}
  %\hfill
  
  \begin{subfigure}[b]{0.7\columnwidth}
    \includegraphics[width=\columnwidth]{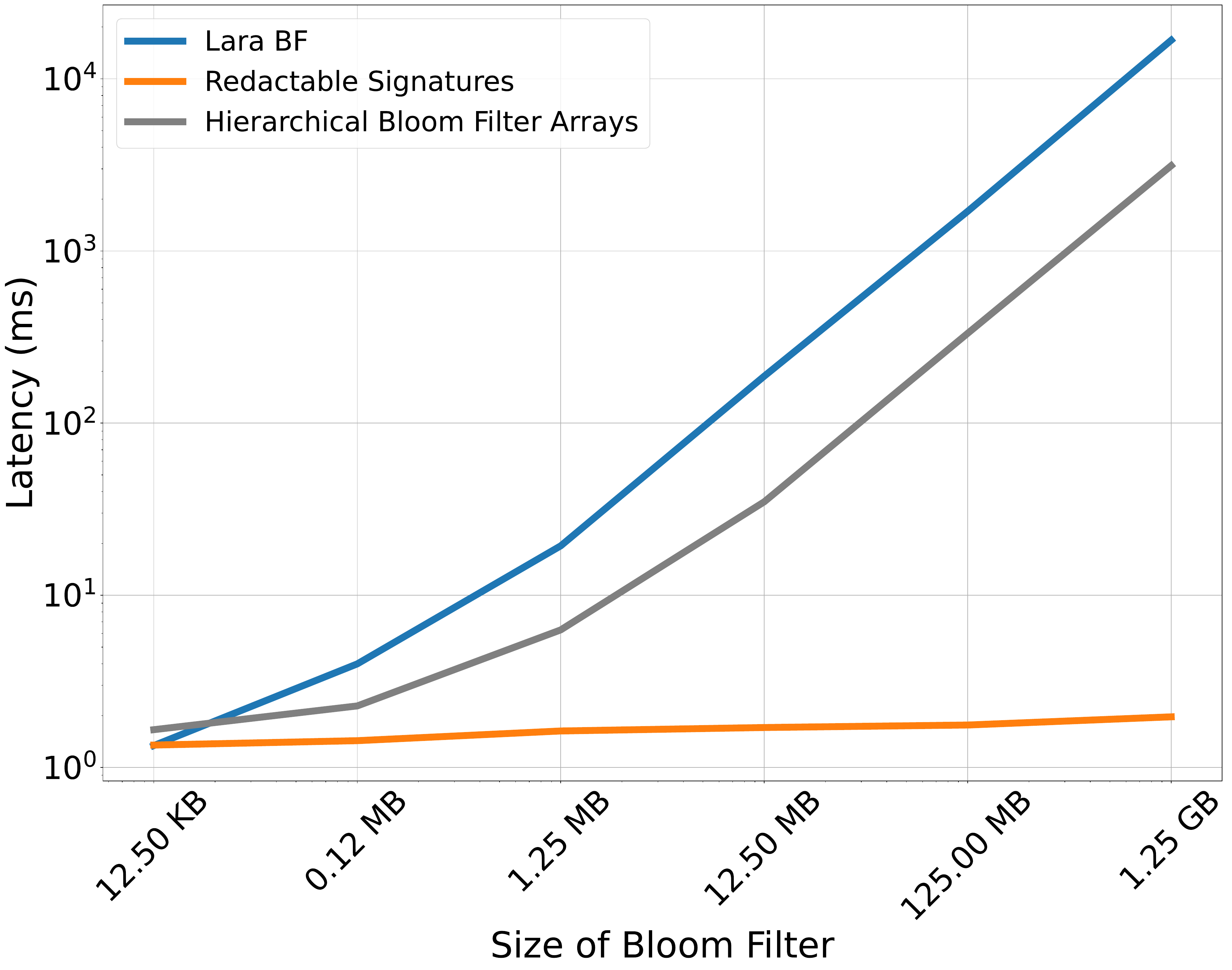}
    \caption{Latency Comparison}
    \label{fig:avg_latencies}
  \end{subfigure}
  \hfill
  \caption{Authentication Latency}
  \label{fig:comparison}
\end{figure}

Figure~\ref{fig:comparison} presents the latency results for the three proposed techniques. In the HBFA implementation, the client retrieves one or more Bloom Filters; we used a configuration with four filters, each growing progressively in size with a specified growth factor of 2. To assess the performance of HBFA, we considered the performance of each individual filter and calculated the expected latency based on the probability of transitioning to the next filter. These results are illustrated in Figure~\ref{fig:cascade_b_filters}. Figure~\ref{fig:avg_latencies} compares the performance of each technique. The single Bloom Filter approach shows a linear increase in latency as the size of the revocation list grows, reaching $\pm 17$ seconds for a 1.25GB list. In contrast, the HBFA configuration follows a similar pattern but with a roughly $80\%$ improvement in efficiency, achieving an authentication latency of $\pm 3$ seconds for the same list. The Redactable Signatures techniques, on the other hand, maintain constant latency regardlessf the list size, with an authentication latency of approximately $\pm 2$ milliseconds.

\section{Conclusions} 

We have introduced \mysystem, a novel lightweight privacy-preserving authentication scheme that ensures backward unlinkability, revocation auditability, and operates independently of timing assumptions. We propose and compare three different implementations of \mysystem, which aim to reduce the amount of data transferred when the revocation audit is performed. The implementation based on Hierarchical Bloom Filter Arrays achieves an 80\% reduction in authentication latency compared to using a single Bloom Filter but adds a noticeable overhead on the time required to create the revocation list. The other implementation, based on Redactable Signatures, proved to be the most efficient: it introduces a negligible overhead when creating the revocation list and enables constant-time authentication, achieving an audit/authentication latency lower than $2$\textit{ms}, regardless of the revocation list size.

\subsection*{Acknowledgements}

This work was supported by the FCT scholarship 2020.05270.BD, by national funds through Funda\c{c}\~ao para a Ci\^encia e a Tecnologia (FCT) via the INESC-ID grant UIDB/50021/2020 and via the SmartRetail project (ref. C6632206063-00466847) financed by IAPMEI, and by the European Union ACES project, 101093126.

\end{document}